\renewcommand{\P}{\ensuremath{\mtext{P}}}
\newcommand{\NP}{\ensuremath{\mtext{NP}}}
\newcommand{\PSPACE}{\ensuremath{\mtext{PSPACE}}}
\newcommand{\para}[1]{\ensuremath{\mtext{para-#1}}}
\newcommand{\Sat}{\ensuremath{\mtext{\sc Sat}}}
\newcommand{\QSat}[1]{\ensuremath{\mtext{\sc QSat}_{#1}}}
\newcommand{\DeltaP}[1]{\ensuremath{\Delta^{\mtext{p}}_{#1}}}
\newcommand{\SigmaP}[1]{\ensuremath{\Sigma^{\mtext{p}}_{#1}}}
\newcommand{\AAA}{\mathcal{A}}
\newcommand{\EEE}{\mathcal{E}}
\newcommand{\LLL}{\mathcal{L}}
\newcommand{\MMM}{\mathcal{M}}
\newcommand{\NNN}{\mathcal{N}}
\newcommand{\PPP}{\mathcal{P}}
\newcommand{\SB}{\{\,}%
\newcommand{\SM}{\;{:}\;}%
\newcommand{\SE}{\,\}}%
\newcommand{\SBs}{\{}%
\newcommand{\SEs}{\}}%
\newcommand{\mtext}[1]{\text{\normalfont #1}}
\newcommand{\ccfont}[1]{\textnormal{\textsf{#1}}}
\newcommand{\citeauthor}[1]{\textcolor{red}{[!!]}}
\newcommand{\citeyear}[1]{\cite{#1}}
\newenvironment{myquote}{\list{}{\leftmargin=\parindent\rightmargin=0in\topsep=3pt}\item[]}{\endlist}
\journalname{Journal of Logic, Language and Information}
\begin{document}

\title{On the Computational Complexity of Model Checking for Dynamic Epistemic Logic with S5 Models}

\titlerunning{On the Complexity of Model Checking for DEL with S5 Models}        

\author{Ronald de Haan
\and
Iris van de Pol\thanks{Iris van de Pol was supported by Gravitation Grant 024.001.006 of the Language in Interaction Consortium from the Dutch Research Council (NWO).}}


\institute{Ronald de Haan \at
              Institute for Logic, Language and Computation \\
              University of Amsterdam \\
              Amsterdam, the Netherlands \\
              \email{me@ronalddehaan.eu}
           \and
           Iris van de Pol \at
              Institute for Logic, Language and Computation \\
              University of Amsterdam \\
              Amsterdam, the Netherlands \\
              \email{i.p.a.vandepol@uva.nl}
}

\date{} 

\maketitle

\begin{abstract}
Dynamic epistemic logic (DEL) is a logical framework for
representing and reasoning about
knowledge change for multiple agents.
An important computational task in this framework
is the model checking problem, which has been shown
to be \PSPACE{}-hard even for S5 models
and two agents{---%
in the presence of other features, such as multi-pointed models}.
We answer open questions in the literature about the complexity
of this problem in more restricted settings.
We provide a detailed complexity analysis
of the model checking problem for DEL,
where we consider various combinations of restrictions,
such as the number of agents,
whether the models are single-pointed or multi-pointed,
and whether postconditions are allowed in the updates.
In particular, we show that the problem is already \PSPACE{}-hard
in (1)~the case of one agent, multi-pointed S5
models, and no postconditions, and (2)~the case of two agents,
only single-pointed S5 models, and no postconditions.
In addition, we study the setting where only
semi-private announcements are allowed as updates. 
We show that for this case the problem
is already \PSPACE{}-hard when restricted to two agents
and three propositional variables.
{
The results that we obtain in this paper help outline the exact boundaries
of the restricted settings for which the model checking problem for
DEL is computationally tractable.
}
\end{abstract}

\section{Introduction}

Dynamic epistemic logic (or DEL, for short) is a logical framework for
representing and reasoning about
knowledge (and belief) change for multiple agents.
This framework has applications in philosophy, cognitive science,
computer science and artificial intelligence
(see, e.g.,~\cite{VanBenthem06,BolanderAndersen11,%
VanDitmarschVanderHoekKooi08,FaginHalpernMosesVardi95,%
MeyerVanderHoek95,Verbrugge09}).
For instance, reasoning about information and knowledge change
is an important topic for multi-agent and distributed systems
\cite{VanderHoekWooldridge08}.

DEL is a very general and expressive framework,
but many settings where the framework is used allow strong
restrictions. For instance, in the context of reasoning about
knowledge, the semantic models
for the logic are often restricted to models
that contain only equivalence relations
(also called S5 models).

For many of the applications of DEL, computational and algorithmic
aspects of the framework are highly relevant.
It is important to study the complexity of computational
problems associated with the logic to determine to what
extent it can be used in practical settings,
and what algorithmic approaches are best suited to solve
these problems.
One important computational task is the problem of model checking,
where the question is to decide whether a formula
is true in a model.


The complexity of the model checking problem for DEL
has been a topic of investigation in the literature.
For a restricted fragment of DEL,
known as public announcement logic
\cite{BaltagMossSolecki98,BaltagMossSolecki99,Plaza89},
the model checking problem is polynomial-time solvable
\cite{VanBenthemVanEijckKooi06,KooiVanBenthem04}.
The problem of DEL model checking, in its general form,
has been shown to be \PSPACE{}-complete
\cite{AucherSchwarzentruber13,VanEijckSchwarzentruber14},
even in the case of two agents and S5 models.
However, these hardness proofs crucially depend on the use of multi-pointed
models, and therefore do not apply for the case where
the problem is restricted to single-pointed S5 models.
This open question was answered with a \PSPACE{}-hardness
proof for the restricted case where all models are 
single-pointed S5 models, 
but where the number of agents is unbounded \cite{VandePol15,VandePolVanRooijSzymanik18}.
It remained open whether these \PSPACE{}-hardness results extend
to more restrictive settings
(e.g., only two agents and single-pointed S5 models).

{
In this paper, we investigate to what extent these \PSPACE{}-hardness
results hold for more demanding combinations of restrictions.
In other words, we study the exact boundaries between
(A)~the combinations of restrictions that lead to the model checking
problem being polynomial-time solvable,
and (B)~the combinations of restrictions for which the model checking
problem is computationally intractable.
Various examples of restrictions have been found that fit in either~(A)
or~(B), but no structural investigation has been done on the exact
boundaries between these two areas.
Investigating these exact boundaries is useful and relevant, for example,
for the development of (implemented) algorithms for the model checking
problem for DEL---we discuss this relevance in more detail
in Section~\ref{sec:discussion}.
}

\paragraph{Other related work}
Various topics related to DEL
model checking have been studied in the literature.
For (several restricted variants of)
a knowledge update framework based on epistemic logic,
the computational complexity of the model checking problem
has been investigated \cite{BaralZhang05}.
Other related work includes implementations of algorithms
for DEL model checking
\cite{VanBenthemVanEijckGattingerSu18,VanEijck07}.
Additionally, research has been done on the complexity of the
satisfiability problem for (fragments of) DEL
\cite{AucherSchwarzentruber13,Lutz2006}.

\paragraph{{Results and contributions}}
In this paper
we provide a detailed computational complexity analysis
of the model checking problem for DEL, restricted to S5 models.
We consider various different restricted settings of this problem.

For the case of arbitrary event models, we have the following
results.
\begin{itemize}
  \item {We make the following folklore result explicit:}
    that the problem is polynomial-time solvable
    in the case of a single agent and single-pointed S5 models
    without postconditions
    (Proposition~\ref{prop:ptime}).
  \item We show that a similar restriction (single agent and
    single-pointed S5 models) where postconditions are allowed already
    leads to \DeltaP{2}-hardness (Theorem~\ref{thm:delta2}).
  \item When multi-pointed event models are allowed, we show that the
    problem is \PSPACE{}-hard even for the case of
    a single agent and S5 models without postconditions
    (Theorem~\ref{thm:pspace-hardness2}).
  \item For the case where there are two agents, we show that the problem
    is already \PSPACE{}-hard when restricted to single-pointed
    S5 models without postconditions
    and with only three propositional variables
    (Theorem~\ref{thm:pspace-hardness}).
\end{itemize}
An overview of the complexity results for arbitrary event
models can be found in Table~\ref{table:results1}.
{
These results outline the boundaries of the tractable setting
of the folklore results pinpointed in Proposition~\ref{prop:ptime}---%
they indicate that relaxing any of the three elements
of the condition (i.e., a single agent, single-pointed models,
and no postconditions) results in computational hardness.
}

Additionally, we consider the setting where instead of arbitrary
event models, only semi-private announcements can be used---%
this is a restricted class of event models.
In this setting, the problem is known to be \PSPACE{}-hard,
when an arbitrary number of agents is allowed
(i.e., when the number of agents is part of the problem input)
\cite[Theorem~4]{VandePol15}.
\begin{itemize}
  \item We show that the problem is already \PSPACE{}-hard
    in the case where there are only two agents
    and only three propositional variables
    (Theorem~\ref{thm:semi-private-pspace-hardness}).%
\footnote{We would like to point out that
Theorem~\ref{thm:semi-private-pspace-hardness} is a stronger
result than Theorem~\ref{thm:pspace-hardness}---%
Theorem~\ref{thm:semi-private-pspace-hardness}
implies the result of Theorem~\ref{thm:pspace-hardness}.
{We present it as two separate results because
the proof of Theorem~\ref{thm:pspace-hardness} acts as a
stepping stone for proving Theorem~\ref{thm:semi-private-pspace-hardness}---%
the proof of Theorem~\ref{thm:pspace-hardness} is useful for
understanding the elaborate proof of Theorem~\ref{thm:semi-private-pspace-hardness}.}}
\end{itemize}

\begin{table}[h!]
  \centering

  \begin{tabular}{@{}l@{\quad}l@{\quad}l@{\quad}r@{\quad} r@{}}
    \toprule
    \# agents & single- or multi-pointed & postconditions & complexity & \\
    \midrule
    $1$ & single & no & in \P{} & (Proposition~\ref{prop:ptime}) \\
    $1$ & single & yes & \DeltaP{2}-hard & (Theorem~\ref{thm:delta2}) \\
    $1$ & multi & no / yes & \PSPACE{}-complete & (Theorem~\ref{thm:pspace-hardness2}) \\
    $\geq 2$ & single / multi & no / yes & \PSPACE{}-complete & (Theorem~\ref{thm:pspace-hardness}) \\
    \bottomrule
  \end{tabular}
  
  \caption{Complexity results for the model checking problem
    for DEL with S5 models and S5 event models.}
  \label{table:results1}
\end{table}

{
\paragraph{Interpretation of the results}
The results that we obtain in the paper contribute to our
understanding of the computational complexity of the model checking problem
for DEL.
In particular, our results form a useful step towards a better comprehension
of how the various elements of the framework of DEL contribute to the
computational costs of the model checking problem.
{
For example, the hardness result of Theorem~\ref{thm:pspace-hardness}
indicates that introducing a second agent---even when severely restricting
several other aspects of the problem---already leads to a problem
that in the worst case is as hard as the general, unrestricted problem.
}
This improved insight can---in future work---be used to develop
(implemented) algorithms for DEL model checking that work more efficiently
in different settings and for different applications.
We discuss the relevance and significance of our results in more
detail in Section~\ref{sec:discussion}.
}

\paragraph{Roadmap}
We begin in Section~\ref{sec:preliminaries}
with reviewing basic notions and notation from
dynamic epistemic logic and complexity theory.
Then, in Section~\ref{sec:updates-with-event-models},
we present the
complexity results for the various settings that involve
updates with (arbitrary) event models.
In Section~\ref{sec:semi-private-announcements}, we present
our \PSPACE{}-hardness proof
for the setting of semi-private announcements.
{We discuss the relevance and significance of our
results for the computational and algorithmic study of the model
checking problem for DEL in Section~\ref{sec:discussion}.}
Finally, we conclude and suggest directions for future
research in Section~\ref{sec:conclusion}.

\section{Preliminaries}
\label{sec:preliminaries}

We briefly review some basic notions from dynamic epistemic logic
and complexity theory that are required for the complexity results
that we present in this paper.

\subsection{Dynamic Epistemic Logic}

We begin by reviewing the syntax and semantics of dynamic epistemic logic.
We consider a version of this logic that is often considered in the
literature (e.g., by Van Ditmarsch, Van der Hoek and Kooi
\cite{VanDitmarschVanderHoekKooi08}).
After describing the logic that we consider in this paper, we briefly
relate it to other variants of dynamic epistemic logic that have been
considered in the literature.

We fix a countable set~$\PPP$ of propositional variables,
and a finite set~$\AAA$ of agents.
We begin with introducing the basic language of epistemic logic,
and its semantics.
The semantics of epistemic logic is based on a type of (Kripke)
structures called \emph{epistemic models}.
Epistemic models are structures that are used to represent
the agents' knowledge about the world and about the
other agents' knowledge.

\begin{definition}[Epistemic models]
An \emph{epistemic mo\-del} is a tuple~$\MMM = (W,R,V)$,
where~$W$ is a non-empty set of worlds,~$R$ maps each
agent~$a \in \AAA$ to a relation~$R_a \subseteq W \times W$,
and~$V : \PPP \rightarrow 2^{W}$ is a function called a
valuation.
By a slight abuse of notation, we write~$w \in \MMM$
for~$w \in W$.
We also write~$v \in R_a(w)$ for~$vR_aw$.
A \emph{single-pointed model} is a pair~$(\MMM,w)$ consisting
of an epistemic model~$\MMM$ and a \emph{designated}
(or \emph{pointed}) \emph{world~$w \in \MMM$}.
A \emph{multi-pointed model} is a pair~$(\MMM,W_d)$ consisting
of an epistemic model~$\MMM$ and a subset~$W_d$
of designated worlds.
\end{definition}

\begin{definition}[Basic epistemic language]
The language~$\LLL_{\mtext{EL}}$ of epistemic logic is defined
as the set of formulas~$\varphi$ defined inductively
as follows, where~$p$ ranges over~$\PPP$
and~$a$ ranges over~$\AAA$:
\[ \varphi \Coloneqq p\ |\ \neg\varphi\ |\ (\varphi \wedge \varphi)\ |\ 
  K_{a} \varphi. \]
The formula~$\bot$ is an abbreviation for~$p \wedge \neg p$,
and the formula~$\top$ is an abbreviation for~$\neg\bot$.
A formula of the form~$(\varphi_1 \vee \varphi_2)$
abbreviates~$\neg(\neg \varphi_1 \wedge \neg \varphi_2)$,
and a formula of the form~$(\varphi_1 \rightarrow \varphi_2)$
abbreviates~$(\neg \varphi_1 \vee \varphi_2)$.
Moreover, a formula of the form~$\hat{K}_a \varphi$ is an abbreviation
for~$\neg K_a \neg \varphi$.
We call formulas of the form~$p$ or~$\neg p$ literals.
We denote the set of all literals by~$\mtext{Lit}$.
\end{definition}

Intuitively, the formula~$K_a \varphi$ expresses
that `agent~$a$ knows
that~$\varphi$ holds in the current situation.'
Next, we define when a formula in the basic epistemic
language is true in a world of an epistemic model.

\begin{definition}[Truth conditions for~$\LLL_{\mtext{EL}}$]
\label{def:truthconditions1}
Given an epistemic model~$\MMM = (W,R,V)$,
we inductively define
the relation~$\models\ \subseteq W \times \LLL_{\mtext{EL}}$
as follows.
For all~$w \in W$:\\[5pt]
\begin{tabular}{l l l}
$\MMM,w \models p$ & iff & $w \in V(p)$ \\
$\MMM,w \models \neg \varphi$ & iff & not $\MMM,w \models \varphi$ \\
$\MMM,w \models \varphi_1 \wedge \varphi_2$ &
  iff & both~$\MMM,w \models \varphi_1$
  and~$\MMM,w \models \varphi_2$ \\
$\MMM,w \models K_a \varphi$ & iff & for all~$v \in R_a(w)$,
  it holds that~$\MMM,v \models \varphi$ \\[5pt]
\end{tabular}
\smallskip

\noindent The statement~$\MMM,w \models \varphi$ expresses that the
formula~$\varphi$ is true in world~$w$ in the model~$\MMM$.
\end{definition}

The framework of dynamic epistemic logic extends the basic epistemic
logic with a notion of updates, that are based on another type of
structures: \emph{event models}.
These are used to represent the effects of an event on the
world and the knowledge of the agents.
{
The notion of event models that we use in this paper involves
postconditions---to bring about changes in the factual state of the
world. Event models with postconditions have been studied and used
in the literature on dynamic epistemic logic
and epistemic planning
(see, e.g.,~\cite{BolanderAndersen11,VanDitmarschKooi06}).
}

\begin{definition}[Event models]
An \emph{event model} is a tuple~$\EEE = (E,S,\ccfont{pre},\ccfont{post})$,
where~$E$ is a non-empty and finite set of possible events,~$S$
maps each agent~$a \in \AAA$ to a
relation~$S_a \subseteq E \times E$,~$\ccfont{pre} : E \rightarrow
\LLL_{\mtext{EL}}$
is a function that maps each event to a precondition
expressed in the epistemic language,
and~$\ccfont{post} : E \rightarrow 2^{\mtext{Lit}}$
is a function that maps
each event to a set of literals (not containing complementary
literals)%
\footnote{Alternatively, one can define postconditions using a
function~$\ccfont{post}: E \times \PPP \rightarrow \LLL_{\mtext{EL}}$,
(see, e.g.,~\cite{VanDitmarschKooi06}).
The complexity results in this paper also hold
when this alternative definition is used.}.
For convenience, we write~$\top$ to denote an
empty postcondition.
By a slight abuse of notation, we write~$e \in \EEE$ for~$e \in E$.
A \emph{single-pointed event model} is a pair~$(\EEE,e)$ consisting
of an event model~$\EEE$ and a \emph{designated} (or \emph{pointed})
\emph{event~$e \in \EEE$}.
A \emph{multi-pointed event model} is a pair~$(\EEE,E_d)$ consisting
of an event model~$\EEE$ and a subset~$E_d \subseteq E$ of
designated events.
\end{definition}

The language of dynamic epistemic logic extends the basic epistemic
language with update modalities.

\begin{definition}[Dynamic epistemic language]
The language~$\LLL_{\mtext{DEL}}$ of dynamic epistemic logic is defined
as the set of formulas~$\varphi$ defined inductively as follows:
\[ \varphi \Coloneqq p\ |\ \neg\varphi\ |\ \varphi \wedge \varphi\ |\ 
  K_{a} \varphi\ |\ [\EEE,e]\varphi\ |\ [\EEE,E_d]\varphi, \]
where~$p$ ranges over~$\PPP$ and~$a$ ranges over~$\AAA$,
and where~$(\EEE,e)$ and~$(\EEE,E_d)$ are single- and multi-pointed
event models, respectively.
A formula of the form~$\langle \EEE,e \rangle \varphi$ is
an abbreviation for~$\neg [\EEE,e] \neg \varphi$;
we use a similar abbreviation~$\langle \EEE,E_d \rangle \varphi$
for updates with multi-pointed event models.
\end{definition}

The effect of these event models is defined using the following notion
of product update.

\begin{definition}[Product update]
Let~$\MMM = (W,R,V)$ be an epistemic model
and let~$\EEE = (E,S,\ccfont{pre},\ccfont{post})$ be an event model.
The \emph{product update of~$\MMM$ by~$\EEE$} is the epistemic
model~$\MMM \otimes \EEE = (W',R',V')$ defined as follows,
where~$p$ ranges over~$\PPP$ and~$a$ ranges over~$\AAA$:
\[ \begin{array}{r l}
  W' =&
    \SB (w,e) \in W \times E \SM \MMM,w \models \ccfont{pre}(e) \SE \\[3pt]
  R'_a =&
    \SB ((w,e),(w',e')) \in W' \times W' \SM
    wR_aw' \mtext{ and } eS_ae' \SE \\[3pt]
  V'(p) =&
    \SB (w,e) \in W' \SM w \in V(p) \mtext{ and }
      \neg p \not\in \ccfont{post}(e) \SE\ \cup \\[2pt]
   & \SB (w,e) \in W' \SM p \in \ccfont{post}(e) \SE \\
\end{array} \]
\end{definition}

%

Next, we define when a formula in the dynamic epistemic
language is true in a world of an epistemic model.

\begin{definition}[Truth conditions for~$\LLL_{\mtext{DEL}}$]
\label{def:truthconditions2}
Given an epistemic model~$\MMM = (W,R,V)$ and a
formula~$\varphi \in \LLL_{\mtext{DEL}}$, we inductively define
the relation~$\models\ \subseteq W \times \LLL_{\mtext{DEL}}$
as follows. For all~$w \in W$:\\[5pt]
\begin{tabular}{l l l}
$\MMM,w \models [\EEE,e] \varphi$ & iff &
  $\MMM,w \models \ccfont{pre}(e)$ implies~$\MMM \otimes \EEE, (w,e) \models \varphi$ \\[2pt]
$\MMM,w \models [\EEE,E_d] \varphi$ & iff &
  $\MMM,w \models [\EEE,e] \varphi$ for all~$e \in E_d$ \\
\end{tabular}
\medskip

\noindent The other cases are identical to Definition~\ref{def:truthconditions1}.
Again, the statement~$\MMM,w \models \varphi$ expresses that the
formula~$\varphi$ is true in state~$w$ in the model~$\MMM$.
\end{definition}

(Having defined the language~$\LLL_{\mtext{DEL}}$,
we could now also change the definition of preconditions
in event models to be functions~$\ccfont{pre} : E \rightarrow
\LLL_{\mtext{DEL}}$ mapping events to formulas in the
dynamic epistemic language~$\LLL_{\mtext{DEL}}$.
The definition of product update would work in an entirely
similar way.
All results in this paper work for either definition 
of preconditions~$\ccfont{pre}$.)

We can then define truth of a formula~$\varphi \in \LLL_{\mtext{DEL}}$
in epistemic models as follows.
A formula~$\varphi$ is true in a single-pointed epistemic model~$(\MMM,w)$
if~$\MMM,w \models \varphi$, and a formula~$\varphi$ is true in a multi-pointed
epistemic model~$(\MMM,W_d)$ if~$\MMM,w \models \varphi$ for all~$w \in W_d$.

For the purposes of representing knowledge, the relations
in epistemic models and event models are often restricted to be
equivalence relations, that is, reflexive, transitive and symmetric
(see, e.g.,~\cite{VanDitmarschVanderHoekKooi08}).
Models that satisfy these requirements are also called
\emph{S5 models}, after the axiomatic system that characterizes
this type of relations.
In the remainder of this paper, we consider only epistemic models
and event models that are S5 models.
All our hardness results hold for S5 models,
as well as for arbitrary models.

For the sake of convenience, we will often depict epistemic models
and event models graphically. We will represent
worlds with solid dots, events with solid squares,
designated worlds and events with
a circle or square around them,
valuations, preconditions and postconditions with
labels next to the dots, and
relations with labelled lines between the dots.
Since we restrict ourselves to S5 models, and thus to equivalence relations,
all relations are symmetric and it suffices to represent relations
with undirected lines.
Moreover, the reflexive relations are not represented graphically.
For a valuation of a world~$w$, we use the literals that the valuation
makes true in world~$w$ as a label,
and for the preconditions and postconditions of an event~$e$,
we use the label~$\langle \ccfont{pre}(e), \ccfont{post}(e) \rangle$.
Moreover, since all epistemic models and event models that we
consider in this paper have reflexive relations, in order not to
clutter the graphical representation of models,
we do not explicitly depict the reflexive relations.
For an example of an epistemic model with its
graphical representation, see Figure~\ref{fig:model0},
and for an example of an event model with its
graphical representation, see Figure~\ref{fig:model0b}.

\begin{figure}[htp!]
\begin{center}
\begin{tikzpicture}
  \tikzstyle{dnode}=[inner sep=1pt,outer sep=1pt,draw,circle,minimum width=9pt]
  \tikzstyle{nnode}=[inner sep=1pt,outer sep=1pt,circle,minimum width=9pt]
  \tikzstyle{label-edge}=[midway,fill=white, inner sep=1pt]
  \node[dnode, label=above:{\scriptsize $z$}, label=below:{\scriptsize $w_1$}] (w0) at (0,0) {};
  \fill (w0) circle [radius=2pt];
  \node[nnode, label=above:{\scriptsize $\neg z$}, label=below:{\scriptsize $w_2$}] (w1) at (3,0) {};
  \fill (w1) circle [radius=2pt];
  \draw[-] (w0) -- (w1) node[label-edge] {\scriptsize $a$};
\end{tikzpicture}
\end{center}
\caption{The epistemic model~$(\MMM,w_1)$ for
the set~$\AAA = \SBs a, b \SEs$ of agents
and a single proposition~$z$,
where~$\MMM = (W,R,V)$,
$W = \SBs w_1,w_2 \SEs$,
$R_a = \SBs (w_1,w_1), (w_1,w_2), (w_2,w_1), (w_2,w_2) \SEs$,
$R_b = \SBs (w_1,w_1), (w_2,w_2) \SEs$,
and~$V(z) = \SBs w_1 \SEs$.}
\label{fig:model0}
\end{figure}

\begin{figure}[htp!]
\begin{center}
\begin{tikzpicture}
  \tikzstyle{dnode}=[inner sep=1pt,outer sep=1pt,draw,circle,minimum width=9pt]
  \tikzstyle{nnode}=[inner sep=1pt,outer sep=1pt,circle,minimum width=9pt]
  \tikzstyle{dnode'}=[inner sep=1pt,outer sep=1pt,draw,rectangle,minimum width=9pt,minimum height=9pt]
  \tikzstyle{nnode'}=[inner sep=1pt,outer sep=1pt,rectangle,minimum width=9pt,minimum height=9pt]
  \tikzstyle{label-edge}=[midway,fill=white, inner sep=1pt]
  \node[dnode', label=above:{\scriptsize $\langle \top, h \rangle$}, label=below:{$e_1$}] (w0) at (0,0) {};
  \fill {(w0)+(-0.07,-0.07)} rectangle (0.07,0.07);
  \node[nnode', label=above:{\scriptsize $\langle \top, \neg h \rangle$}, label=below:{$e_2$}] (w1) at (3,0) {};
  \fill {(w1)+(-0.07,-0.07)} rectangle (3.07,0.07);
  \draw[-] (w0) -- (w1) node[label-edge] {\scriptsize $a$};
\end{tikzpicture}
\end{center}
\caption{The event model~$(\EEE,e_1)$ for
the set~$\AAA = \SBs a, b \SEs$ of agents
and a single proposition~$h$,
where~$\MMM = (E,S,\ccfont{pre},\ccfont{post})$,
$E = \SBs e_1,e_2 \SEs$,
$R_a = \SBs (e_1,e_1), (e_1,e_2), (e_2,e_1), (e_2,e_2) \SEs$,
$R_b = \SBs (e_1,e_1), (e_2,e_2) \SEs$,
$\ccfont{pre}(e_1) = \ccfont{pre}(e_2) = \top$,
$\ccfont{post}(e_1) = h$, and
$\ccfont{post}(e_2) = \neg h$.}
\label{fig:model0b}
\end{figure}

%

\paragraph{{Semi-private announcements}}
A particular type of S5 event models that has been considered
in the literature are \emph{semi-private} (or \emph{semi-public})
\emph{announcements} \cite{BaltagMoss04}.
Intuitively, a semi-private announcement publicly announces
one of two formulas~$\varphi_1,\varphi_2$ to a subset~$A$ of agents,
and to the remaining agents it publicly announces that one of the
two formulas is the case, and that the agents
in~$A$ learned which one is.
A semi-private announcement for
formulas~$\varphi_1,\varphi_2$ and a subset~$A \subseteq \AAA$ of
agents is represented by the event model in
Figure~\ref{fig:semi-private}.
%

To illustrate the notion of semi-private announcements, consider
the following example scenario.
There are two agents, Ayla ($a$) and Blair ($b$).
Ayla flips a coin, which lands either on heads ($h$)
or on tails ($\neg h$), and hides the result of the coin flip
from Blair.
Blair sees that the coin is flipped and that Ayla knows the result
of the coin flip, but Blair herself does not see the result of the
coin flip.
This semi-private announcement is represented by the
event model~$\EEE$ that is depicted in Figure~\ref{fig:model0b}
(in the event model depicted in Figure~\ref{fig:model0b},
the coin lands on heads).

\begin{figure}[htp!]
\begin{center}
\begin{tikzpicture}
  \tikzstyle{dnode}=[inner sep=1pt,outer sep=1pt,draw,circle,minimum width=9pt]
  \tikzstyle{nnode}=[inner sep=1pt,outer sep=1pt,circle,minimum width=9pt]
  \tikzstyle{dnode'}=[inner sep=1pt,outer sep=1pt,draw,rectangle,minimum width=9pt,minimum height=9pt]
  \tikzstyle{nnode'}=[inner sep=1pt,outer sep=1pt,rectangle,minimum width=9pt,minimum height=9pt]
  \tikzstyle{label-edge}=[midway,fill=white, inner sep=1pt]
  \node[nnode', label=above:{\scriptsize $\langle \varphi_1, \top \rangle$}] (w0) at (0,0) {};
  \fill {(w0)+(-0.07,-0.07)} rectangle (0.07,0.07);
  \node[nnode', label=above:{\scriptsize $\langle \varphi_2, \top \rangle$}] (w1) at (3,0) {};
  \fill {(w1)+(-0.07,-0.07)} rectangle (3.07,0.07);
  \draw[-] (w0) -- (w1) node[label-edge] {\scriptsize $\AAA \backslash A$};
\end{tikzpicture}
\end{center}
\caption{A semi-private announcement for formulas~$\varphi_1,\varphi_2$
and the subset~$A \subseteq \AAA$ of agents.}
\label{fig:semi-private}
\end{figure}
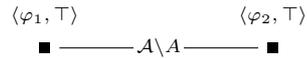


\paragraph{Relations to other variants of Dynamic Epistemic Logic}
The formalism of dynamic epistemic logic that we consider is based
on the one originally introduced by
Baltag, Moss, and Solecki~%
\cite{BaltagMossSolecki98,BaltagMossSolecki99}.
Their language only considers single-pointed event models.
A few years later, Baltag and Moss~\cite{BaltagMoss04} extended
this original language to include regular operators (union,
composition and `star') for the update modalities.
The language that we consider corresponds to the variant
of their language with only the union operator.
The language presented by Van Ditmarsch et al.~%
in their textbook \cite{VanDitmarschVanderHoekKooi08}
resembles the language that we consider,
as their framework also allows the union operator
for updates, but not the composition or `star' operators.
The union operator for update modalities corresponds to allowing
multi-pointed event models.
Because it simplifies notation, we use multi-pointed models,
following the notation of other existing work \cite{BolanderAndersen11}.
%
%
Additionally, the language that we consider also allows events
to have postconditions, unlike the language presented by
Van Ditmarsch et al.~%
\cite{VanDitmarschVanderHoekKooi08}.

\subsection{Computational Complexity}

Next, we review some basic notions from computational complexity
that are used in the proofs of the results that we present.
We assume the reader to be familiar with the complexity classes
\P{} and \NP{}, and with basic notions such as polynomial-time
reductions.
For more details, we refer to textbooks on computational complexity
theory (see, e.g.,~\cite{AroraBarak09}).

The class \PSPACE{} consists of all decision problems that can be
solved by an algorithm that uses a polynomial amount of space
(memory). Alternatively, one can characterize the class \PSPACE{}
as all decision problems for which there exists a polynomial-time
reduction to the problem \QSat{}, that is defined using quantified
Boolean formulas as follows.
A (fully) quantified Boolean formula (in prenex form) is a formula
of the form~$Q_1 x_1 Q_2 x_2 \dotsc Q_n x_n. \psi$,
where all~$x_i$ are propositional variables,
each~$Q_i$ is either an existential or a universal quantifier,
and~$\psi$ is a (quantifier-free) propositional formula over the
variables~$x_1,\dotsc,x_n$.
Truth for such formulas is defined in the usual way.
The problem \QSat{} consists of deciding whether a given
quantified Boolean formula is true.
Moreover, \QSat{} is \PSPACE{}-hard even when restricted
to the case where~$Q_i = \exists$ for odd~$i$ and~$Q_i = \forall$
for even~$i$.
(For the proofs of Theorems~\ref{thm:pspace-hardness2},~%
\ref{thm:pspace-hardness}
and~\ref{thm:semi-private-pspace-hardness},
we will use reductions from this restricted variant of \QSat{}.)

Additionally, one can restrict the number of quantifier alternations
occurring in quantified Boolean formulas, i.e., the number of
times where~$Q_i \neq Q_{i+1}$.
For each constant~$k \geq 1$ number of alternations,
this leads to a different complexity
class. These classes together constitute the Polynomial Hierarchy.
We consider the complexity classes~\SigmaP{k}, for each~$k \geq 1$.
The complexity class~\SigmaP{k} consists of all decision problems
for which there exists a polynomial-time reduction to the problem
\QSat{k}.
Instances of \QSat{k} are quantified Boolean formulas of the
form~$\exists x_1 \dotsc \exists x_{\ell_1} \forall x_{\ell_1+1}
\dotsc \forall x_{\ell_2} \dotsc \allowbreak Q_k x_{\ell_{k-1}+1} \dotsc Q_k x_{\ell_k}.$ $\psi$,
where~$Q_k = \exists$ if~$k$ is odd and~$Q_k = \forall$ if~$k$ is even,
where~$1 \leq \ell_{1} \leq \dotsm \leq \ell_{k}$,
and where~$\psi$ is quantifier-free.
The question is to decide whether the quantified Boolean formula is true.

The last complexity class that we consider is \DeltaP{2}.
We give a definition of this class that is based on algorithms with access
to an oracle, i.e., a black box that is able to decide certain decision
problems in a single operation.
Consider the problem \Sat{} of deciding satisfiability of a given
propositional formula.
The class \DeltaP{2} consists of all decision problems that can be solved
by a polynomial-time algorithm with access to an oracle for \Sat{}.
Alternatively, the class \DeltaP{2} consists of all decision problems for
which there exists a polynomial-time reduction to the problem
where one is given a satisfiable propositional formula~$\varphi$ over the
variables~$x_1,\dotsc,x_n$, and the question is whether the
lexicographically maximal assignment that satisfies~$\varphi$
(given the fixed ordering~$x_1 \prec \dotsm \prec x_n$)
sets variable~$x_n$ to true \cite{Krentel92}.
An assignment~$\alpha_1$ is lexicographically larger than an
assignment~$\alpha_2$ (given the ordering~$x_1 \prec \dotsm \prec x_n$)
if there exists some~$1 \leq i \leq n$ such
that~$\alpha_1(x_i) = 1$,~$\alpha_2(x_i) = 0$,
and for all~$1 \leq j \leq i$ it holds that~$\alpha_1(x_j) = \alpha_2(x_j)$.

\section{Results for updates with arbitrary S5 models}
\label{sec:updates-with-event-models}

In this section, we provide complexity results for the model checking problem
for DEL when arbitrary event models are allowed for the update modalities
in the formulas.
For several cases, we prove \PSPACE{}-hardness.
Since the problem was recently shown to be in \PSPACE{} for the most
general variant of dynamic epistemic logic that we consider in this
paper~\cite{AucherSchwarzentruber13,%
VandePol15,VandePolVanRooijSzymanik18},
these hardness results suffice to show \PSPACE{}-completeness.

\subsection{Polynomial-time solvability}

We begin with showing polynomial-time solvability for the strongest
restriction that we consider in this paper: a single agent, single-pointed
S5 models, and no postconditions.
{
This is a result that is well-known and
can be seen as part of the folklore of the
DEL literature, but for which---to the best of our knowledge---%
no detailed proof has been published.
The high-level idea of the proof is straightforward:
even though updating the model with an event model in this setting might duplicate
a lot of words---potentially resulting in an exponential blow-up in the number
of worlds---these worlds are copies of only a small number of distinct worlds.
We can identify a representative for each of the worlds in polynomial-time,
and at each step in the recursive evaluation of the formula, we keep only these
representative worlds.
To work out an algorithm that implements this proof idea requires some
attention to algorithmic details (e.g., using the technique of dynamic programming).
{
We present the proof of Proposition~\ref{prop:ptime} in detail
to give insight into the exact algorithmic details involved in the proof---%
providing a precise recipe that can be used to implement the algorithm.
}}

\begin{proposition}
\label{prop:ptime}
The model checking problem for DEL with S5 models
is poly\-nomial-time solvable
when restricted to instances with a single agent 
and only single-pointed event models
without postconditions.
\end{proposition}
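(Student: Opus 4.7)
The plan is to give a polynomial-time recursive evaluation procedure that never explicitly constructs any product model, exploiting two structural features of this restricted setting: (i)~since the single agent's relation~$R_a$ is an equivalence, the only epistemically relevant information at any state of any iterated product is its equivalence class; and (ii)~since events carry no postconditions, the valuation of an atom at a product state~$(w, e_1, \ldots, e_j)$ coincides with its valuation at~$w$ in the initial model~$\MMM = (W,R,V)$. Together these allow me to summarize the ``state'' during evaluation by a pair~$(w, C)$, where~$C \subseteq W$ is the projection to~$\MMM$ of the equivalence class of the current world in the current product model, and~$w \in C$ is its~$\MMM$-component.

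I define a recursive procedure~$\mtext{Eval}(w, C, \psi)$ by cases: at an atom~$p$, return~$[w \in V(p)]$; the Boolean connectives are routine; at~$K_a \psi_1$, return~$\bigwedge_{w' \in C} \mtext{Eval}(w', C, \psi_1)$; at~$[\EEE, e]\psi_1$ with~$\EEE = (E,S,\ccfont{pre},\ccfont{post})$, first check~$\mtext{Eval}(w, C, \ccfont{pre}(e))$, returning true if this is false, and otherwise compute
\[ C' = \SB w' \in C \SM \exists e' \in E \mtext{ with } e S_a e' \mtext{ and } \mtext{Eval}(w', C, \ccfont{pre}(e')) = \mtext{true} \SE \]
and return~$\mtext{Eval}(w, C', \psi_1)$. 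Correctness is a straightforward induction on~$\psi$ together with a parallel induction on EL-formulas for preconditions: (ii)~handles the atomic case; reflexivity, symmetry and transitivity of~$R_a$ and~$S_a$ imply that~$C'$ really is the~$\MMM$-projection of the equivalence class of~$(w, e)$ in the updated product model; and the inductive use of (i)~guarantees that EL-formulas---preconditions included---depend only on the compact state~$(w', C)$.

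For the running time, the crucial observation is that the second argument~$C$ handed to any given occurrence of a subformula of~$\varphi$ (or of any precondition used in~$\varphi$) is uniquely determined by the sequence of update modalities on the path from the root of~$\varphi$ to that occurrence: the update rule~$C \mapsto C'$ depends only on~$C$ and on the syntactic update being applied, and the~$K_a$ and Boolean steps do not alter~$C$. Consequently at most~$O(|\varphi|)$ distinct values of~$C$ arise. Memoizing, for each occurrence of a subformula~$\psi_s$ one fills the Boolean function~$w \mapsto \mtext{Eval}(w, C_{\psi_s}, \psi_s)$ on~$C_{\psi_s}$ as a table of size at most~$|W|$, giving a total table with~$O(|\varphi| \cdot |W|)$ entries, each computable in polynomial time from previously filled tables (the~$K_a$ case collapses to a constant function on~$C$, and the update case uses polynomially many look-ups into precondition tables and the table for~$\psi_1$ at~$C'$).

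The only point requiring genuine care is the uniqueness of~$C_{\psi_s}$ just noted: one must check that the map~$C \mapsto C'$ induced by an update really is well-defined independently of the designated world, which is precisely where all three hypotheses---single agent, S5, and no postconditions---are used. Everything else is routine bookkeeping.
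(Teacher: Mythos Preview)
Your proposal is correct and takes essentially the same approach as the paper: both arguments rest on the observation that, with a single agent, S5 relations, and no postconditions, each product model is bisimilar to the submodel of~$\MMM$ induced by a subset~$C \subseteq W$ (the paper calls it~$W''$), that this subset depends only on the sequence of updates applied so far, and that memoizing over the polynomially many such subsets and subformulas yields a polynomial-time algorithm. Your presentation is slightly more explicit in packaging the state as a pair~$(w,C)$ and in spelling out why~$C \mapsto C'$ is independent of the designated world, but the content is the same.
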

\begin{proof}
We describe a polynomial-time algorithm that solves the problem.
The main idea behind this algorithm is the following.
Even though the updates might cause an exponential blow-up in
the number of worlds in the model, in this restricted setting,
we only need to remember a small number of these worlds.

Concretely, since there is only a single agent~$a$,
and since there is only a single designated world~$w_0$,
we only need to remember the set of worlds that are connected
with an $a$-relation to the designated world~$w_0$.
Moreover, among these worlds, we can merge those with
an identical valuation.
Since the event models contain no postconditions, this
(contracted) set of worlds can only decrease after updates,
i.e., updates can only remove worlds from this set.

Formally, we can describe this argument as follows.
Let~$(\MMM,w_0)$ be a single-pointed S5 epistemic model with
one agent,
and let~$(\EEE,e_0)$ be a single-pointed S5 event model with 
one agent and no postconditions.
Then~$\MMM \otimes \EEE$ is bisimilar to a submodel~$\MMM'$
of~$\MMM$, that is, to some~$\MMM'$ that can be obtained from~$\MMM$ by
removing some worlds.
Specifically, let~$W'$ be the set of worlds in~$\MMM$ that are
$a$-accessible from~$w_0$, and let~$E'$ be the set of events
in~$\EEE$ that are $a$-accessible from~$e_0$.
Then, let~$W'' \subseteq W'$ be the subset of worlds that satisfy
the precondition of at least one~$e \in E'$.
One can straightforwardly verify that~$(\MMM \otimes \EEE, (w_0,e_0))$ is
bisimilar to the submodel~$(\MMM',w_0)$ of~$(\MMM,w_0)$ induced by~$W''$.
Moreover,~$\MMM'$ can be computed in polynomial time.

Using this property, we can construct a recursive algorithm
to decide whether~$\MMM,w \models \varphi$.
We consider several cases.
In the case where~$\varphi = p$ for some~$p \in \PPP$,
the problem can easily be solved in polynomial time,
by simply checking whether~$w \in V(p)$.
In the case where~$\varphi = \neg \varphi_1$,
we can recursively call the
algorithm to decide whether~$\MMM,w \models \varphi_1$,
and return the opposite answer.
Similarly, for~$\varphi = \varphi_1 \wedge \varphi_2$,
we can straightforwardly decide whether~$\MMM,w \models \varphi$
by first recursively determining
whether~$\MMM,w \models \varphi_1$
and whether~$\MMM,w \models \varphi_2$.
In the case where~$\varphi = K_{a} \varphi_1$, we firstly
recursively determine whether~$\MMM,w' \models \varphi_1$
for each~$w' \in W$ that is $a$-accessible from~$w$.
This information immediately determines
whether~$\MMM,w \models K_{a} \varphi_1$.

Finally, consider the case where~$\varphi = [\EEE,e] \varphi_1$.
In this case, we firstly recursively decide
whether~$\MMM,w \models \ccfont{pre}(e)$.
If this is not the case, then trivially,~$\MMM,w \models \varphi$.
Otherwise, we construct the submodel~$\MMM'$ of~$\MMM$
that is bisimilar to~$\MMM \otimes \EEE$.
This can be done as described above.
In order to do this, we need to decide which states in~$W'$
satisfy the precondition of some~$e' \in E'$, where~$W' \subseteq W$
and~$E' \subseteq E$ are defined as explained above.
This can be done by recursive calls of the algorithm.
Having determined~$W'$, and having constructed~$\MMM'$,
we can now answer the question whether~$\MMM,w \models [\EEE,e]\varphi_1$
by using only~$\MMM'$,~$w$ and~$\varphi_1$.
We know that~$w$ is a world
in~$\MMM'$, since~$\MMM,w \models \ccfont{pre}(e)$.
Since~$\MMM'$ is bisimilar to~$\MMM \otimes \EEE$,
it holds that~$\MMM \otimes \EEE, (w,e) \models \varphi_1$
if and only if~$\MMM',w \models \varphi_1$.
Therefore, by recursively calling the algorithm to decide
whether~$\MMM',w \models \varphi_1$, we can decide
whether~$\MMM,w \models [\EEE,e] \varphi_1$.

It is straightforward to verify that this recursive algorithm
correctly decides whether~$\MMM,w \models \varphi$.
However, naively executing this recursive algorithm will, in
the worst case result in an exponential running time.
This is because for the case for~$\varphi = [\EEE,e] \varphi_1$,
the algorithm makes multiple (say~$b \geq 2$) recursive calls for~$\ccfont{pre}(e)$,
and~$\ccfont{pre}(e)$ could contain subformulas
of the form~$[\EEE',e'] \varphi'$---which in turn triggers multiple recursive
calls for~$\ccfont{pre}(e')$ for each of the~$b$ branches in the recursion
tree, and so forth.
As the number of these iterations can grow linearly with the input size
(say~$f(n)$),
the recursion tree can be of exponential size
(namely, of size~$\geq 2^{f(n)}$).
We describe how to modify the algorithm to run in
polynomial time, using the technique of memoization.
Whenever a recursive call is made to decide
whether~$\NNN,u \models \psi$, for some
submodel~$\NNN$ of~$\MMM$, some world~$w$
in~$\NNN$, and some subformula~$\psi$ of~$\varphi$,
the result of this recursive call is stored in a lookup table.
Moreover, before making a recursive call to decide
whether~$\NNN,u \models \psi$, the lookup table is
consulted, and if an answer is stored, the algorithm uses
this answer instead of executing the recursive call.

The number of submodels~$\NNN$ of~$\MMM$
that need to be considered in the execution of the modified
algorithm is upper bounded by the number of occurrences
of update operators~$[\EEE,e]$ in the formula~$\varphi$
that is given as input to the problem.
Therefore, the size of the lookup table is polynomial
in the input size.
Moreover, computing the answer for any entry in the lookup
table can be done in polynomial time (using the answers for
other entries in the lookup table).
Therefore, the modified algorithm decides
whether~$\MMM,w \models \varphi$ in polynomial time.
\qed\end{proof}

\subsection{Hardness results for one agent}

Next, we consider the restriction where we have a single agent
and single-pointed models, but where postconditions
are allowed in the event models.%
\footnote{
The reader might wonder for what type of situations the DEL setting
of Theorem~\ref{thm:delta2} (a single agent, single-pointed S5 models,
and postconditions) could be useful.
This restricted setting is relevant, for example, when reasoning about
the epistemic state of a single agent in the face of undercainty over
changes in the world (made by nature).
A simple example of a situation where such reasoning plays a role
is in the analysis of single-player memory games where
the state of (parts of) the game board can be changed randomly
by the rules of the game.}
In this case, the problem is \DeltaP{2}-hard.
{
This hardness result is interesting because it helps identify the
boundaries of the tractable fragment of Proposition~\ref{prop:ptime}.
The result of Theorem~\ref{thm:delta2} shows that adding the single element
of postconditions to this tractable fragment leads to computational
hardness.}

\begin{theorem}
\label{thm:delta2}
The model checking problem for DEL with S5 models
restricted to instances with a single agent
and only single-pointed models,
but where event models can contain postconditions,
is \DeltaP{2}-hard.
\end{theorem}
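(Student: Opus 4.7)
The plan is to reduce from the \DeltaP{2}-hard problem of deciding, given a satisfiable propositional formula~$\varphi(x_1,\ldots,x_n)$, whether its lexicographically maximum satisfying assignment sets~$x_n$ to~$1$. The idea is to encode the greedy algorithm that computes this assignment by iteratively restricting an uncertainty model to the greedy path and reading off the last bit at the end.

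The initial epistemic model contains a single designated world~$w_0$ in which every propositional variable is false. The formula first applies~$n$ branching updates $[\mathcal{U}_1,u_1]\cdots[\mathcal{U}_n,u_n]$, where each~$\mathcal{U}_i$ is a single-pointed, two-event S5 event model whose two events are $a$-equivalent, have trivial preconditions, and carry postconditions~$\neg y_i$ (for the designated event) and~$y_i$. After these updates the agent considers~$2^n$ worlds---one per valuation of~$y_1,\ldots,y_n$---all in a single~$a$-equivalence class, and the designated world has every~$y_i$ false.

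Let $\tilde\varphi$ denote~$\varphi$ with each~$x_j$ replaced by~$y_j$, write $g_i := \hat{K}_a(y_i\wedge\tilde\varphi)$, and define the remainder of the formula recursively by $\Phi_n := y_n$ and
\[
  \Phi_i \;:=\; \hat{K}_a\,\langle\EEE_{i+1},r_{i+1}\rangle\,\Phi_{i+1} \quad \text{for } i<n,
\]
where each~$\EEE_{i+1}$ is a single-pointed, single-event S5 event model whose sole event~$r_{i+1}$ has precondition $y_{i+1}\leftrightarrow g_{i+1}$ and no postcondition. Intuitively, in the current model $g_{i+1}$ holds iff the current partial assignment extends to a satisfying assignment with~$y_{i+1}=1$; the combination $\hat{K}_a\langle\EEE_{i+1},r_{i+1}\rangle$ asserts the existence of a world where $y_{i+1}$ already matches this greedy value (so that the restrict update can fire) and then restricts the model to such worlds. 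The full formula evaluated at~$w_0$ is $[\mathcal{U}_1,u_1]\cdots[\mathcal{U}_n,u_n]\,\Phi_0$.

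Correctness is by induction on~$i$: after the~$i$-th restrict, the surviving worlds are exactly those whose $y_1,\ldots,y_i$ agree with the lex-max greedy prefix, so $g_{i+1}$ evaluated in that restricted model correctly encodes the greedy value of $y_{i+1}$ relative to these fixed bits; after all~$n$ restricts the model has been whittled down to the single world corresponding to the lex-max satisfying assignment, and evaluating $y_n$ there returns the answer. The construction is polynomial in~$|\varphi|+n$ because each~$\Phi_i$ contains only one textual copy of~$\Phi_{i+1}$, giving $|\Phi_0|=O(n|\varphi|)$; it uses a single agent, only single-pointed S5 models, and postconditions only in the build phase, matching the theorem's restrictions. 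The main delicate point---and what forces this particular shape of construction---is avoiding the naive ``conjunction of two updates with complementary preconditions'' pattern, which would blow up exponentially over~$n$ iterations: by using $\hat{K}_a$ to select the greedy world and then a single restrict update to commit to it, this blow-up is neatly sidestepped.
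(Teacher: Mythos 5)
Your reduction is correct and starts from the same source problem as the paper (Krentel's lexicographically-maximal-satisfying-assignment problem), with the same two-phase architecture: first a sequence of postcondition updates that blows the model up into one $a$-clique containing a world for every assignment, then $n$ greedy steps each of which tests $\hat{K}_a(x_i \wedge \varphi)$ inside an event model to decide the lex-max value of $x_i$. Where you genuinely diverge is in how the greedy commitment is implemented. The paper never deletes worlds in the second phase: its event model $\EEE'_i$ has two events with complementary preconditions $\neg z \wedge \hat{K}_a(x_i\wedge\varphi)$ and $\neg z \wedge \neg\hat{K}_a(x_i\wedge\varphi)$ and postconditions $x_i$ and $\neg x_i$ respectively, so each update \emph{overwrites} the valuation of $x_i$ in all worlds with the greedy value; a reserved $z$-world (kept alive by a third event with precondition $z$) anchors the designated point, the whole formula is a flat prefix of box updates, and the answer is read off as $\hat{K}_a x_n$ at the end. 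You instead \emph{filter}: a single-event update with the biconditional precondition $y_{i+1}\leftrightarrow g_{i+1}$ deletes the non-conforming half of the model, and you interleave $\hat{K}_a$ with the diamond update to hop to a surviving world before each restriction (which is what makes the designated all-false world harmless, the role the paper's $z$-world plays). Both implementations keep the formula size at $O(n|\varphi|)$ and both rely on $g_{i+1}$ being constant across the single $a$-class so that the precondition acts globally; your version has the mild aesthetic advantage of confining postconditions to the build phase, while the paper's version keeps the dynamic operators as a clean prefix and needs no satisfiability-dependent existence argument for $\hat{K}_a$-witnesses (though in your setting that existence is guaranteed by the promise that $\varphi$ is satisfiable, so nothing breaks).
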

\begin{proof}
To show \DeltaP{2}-hardness, we give a polynomial-time reduction
from the problem of deciding whether the lexicographically maximal
assignment that satisfies
a given propositional formula~$\varphi$
over variables~$x_1,\dotsc,x_n$
sets the variable~$x_n$ to true.
Let~$\varphi$ be an instance of this problem, with
variables~$x_1,\dotsc,x_n$.
We construct a single-pointed epistemic model~$(\MMM,w_0)$
with a single agent~$a$
and a DEL-formula~$\chi$ whose updates consist of
single-pointed event models (that contain postconditions),
such that~$\MMM,w_0 \models \chi$ if and only
if~$x_n$ is true in the lexicographically maximal assignment
that satisfies~$\varphi$.

In addition to the propositional variables~$x_1,\dotsc,x_n$, we
introduce a variable~$z$.
Then, we construct the model~$(\MMM,w_0)$
as depicted in Figure~\ref{fig:model1}.

\begin{figure}[htp!]
\begin{center}
\begin{tikzpicture}
  \tikzstyle{dnode}=[inner sep=1pt,outer sep=1pt,draw,circle,minimum width=9pt]
  \tikzstyle{nnode}=[inner sep=1pt,outer sep=1pt,circle,minimum width=9pt]
  \tikzstyle{label-edge}=[midway,fill=white, inner sep=1pt]
  \node[dnode, label=above:{\scriptsize $z, \neg x_1, \dotsc, \neg x_n$}] (w0) at (0,0) {};
  \fill (w0) circle [radius=2pt];
  \node[nnode, label=above:{\scriptsize $\neg z, \neg x_1, \dotsc, \neg x_n$}] (w1) at (3,0) {};
  \fill (w1) circle [radius=2pt];
  \draw[-] (w0) -- (w1) node[label-edge] {\scriptsize $a$};
\end{tikzpicture}
\end{center}
\caption{The epistemic model~$(\MMM,w_0)$,
used in the proof of Theorem~\ref{thm:delta2}.}
\label{fig:model1}
\end{figure}
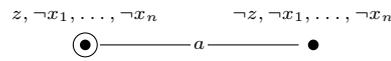

Then, for each~$1 \leq i \leq n$, we introduce the single-pointed
event model~$(\EEE_i,e_i)$ as depicted in Figure~\ref{fig:update1}.
Intuitively, these updates will serve to generate, for each possible
truth assignment~$\alpha$ to the variables~$x_1,\dotsc,x_n$,
a world that agrees with~$\alpha$ (and that sets~$z$ to false),
in addition to the designated world (where~$z$ is true).
We will denote the model resulting from updating~$(\MMM,w_0)$
subsequently
with the updates~$(\EEE_1,e_1),\dotsc,(\EEE_n,e_n)$ by~$(\MMM',w')$.

\begin{figure}[htp!]
\begin{center}
\begin{tikzpicture}
  \tikzstyle{dnode}=[inner sep=1pt,outer sep=1pt,draw,circle,minimum width=9pt]
  \tikzstyle{nnode}=[inner sep=1pt,outer sep=1pt,circle,minimum width=9pt]
  \tikzstyle{dnode'}=[inner sep=1pt,outer sep=1pt,draw,rectangle,minimum width=9pt,minimum height=9pt]
  \tikzstyle{nnode'}=[inner sep=1pt,outer sep=1pt,rectangle,minimum width=9pt,minimum height=9pt]  \tikzstyle{label-edge}=[midway,fill=white, inner sep=1pt]
  \node[dnode', label=above:{\scriptsize $\langle z, \top \rangle$}] (w0) at (3,1.5) {};
  \fill {(w0)+(-0.07,-0.07)} rectangle (3.07,1.57);
  \node[nnode', label=below:{\scriptsize $\langle \neg z, x_i \rangle$}] (w1) at (2,0) {};
  \fill {(w1)+(-0.07,-0.07)} rectangle (2.07,0.07);
  \node[nnode', label=below:{\scriptsize $\langle \neg z, \top \rangle$}] (w2) at (4,0) {};
  \fill {(w2)+(-0.07,-0.07)} rectangle (4.07,0.07);
  \draw[-] (w0) -- (w1) node[label-edge] {\scriptsize $a$};
  \draw[-] (w1) -- (w2) node[label-edge] {\scriptsize $a$};
  \draw[-] (w0) -- (w2) node[label-edge] {\scriptsize $a$};
\end{tikzpicture}
\end{center}
\caption{The event model~$(\EEE_i,e_i)$,
used in the proof of Theorem~\ref{thm:delta2}.}
\label{fig:update1}
\end{figure}

Next, for each~$1 \leq i \leq n$, we introduce the single-pointed event
model~$(\EEE'_i,e'_i)$ as depicted in Figure~\ref{fig:update2}.
Intuitively, we will use the event models~$(\EEE'_i,e'_i)$ to obtain
(many copies of)
the lexicographically maximal assignment that
satisfies~$\varphi$.
Applying the update~$(\EEE'_i,e'_i)$ to~$(\MMM',w')$ will
set the variable~$x_i$ to true in all worlds (that satisfy~$\neg z$)
if there is an assignment
(among the remaining assignments)
that satisfies~$\varphi$ and
that sets~$x_i$ to true,
and it will set the variable~$x_i$ to false in all worlds
(that satisfy~$\neg z$) otherwise.
Then, after applying the
updates~$(\EEE'_1,e'_1),\dotsc,(\EEE'_n,e'_n)$
to~$(\MMM',w')$,
all worlds in the resulting model will have the same
valuation---namely, a valuation
that agrees with the lexicographically maximal assignment
that satisfies~$\varphi$.
In particular,
the variable~$x_n$ is true in this valuation if and
only if~$x_n$ is true in the lexicographically maximal
assignment that satisfies~$\varphi$.

\begin{figure}[htp!]
\begin{center}
\begin{tikzpicture}
  \tikzstyle{dnode}=[inner sep=1pt,outer sep=1pt,draw,circle,minimum width=9pt]
  \tikzstyle{nnode}=[inner sep=1pt,outer sep=1pt,circle,minimum width=9pt]
  \tikzstyle{dnode'}=[inner sep=1pt,outer sep=1pt,draw,rectangle,minimum width=9pt,minimum height=9pt]
  \tikzstyle{nnode'}=[inner sep=1pt,outer sep=1pt,rectangle,minimum width=9pt,minimum height=9pt] 
  \tikzstyle{label-edge}=[midway,fill=white, inner sep=1pt]  \tikzstyle{label-edge}=[midway,fill=white, inner sep=1pt]
  \node[dnode', label=above:{\scriptsize $\langle z, \top \rangle$}] (w0) at (3.75,1.5) {};
  \fill {(w0)+(-0.07,-0.07)} rectangle (3.82,1.57);
  \node[nnode', label=below:{\scriptsize $\langle \neg z \wedge \hat{K}_a (x_i \wedge \varphi), x_i \rangle$}] (w1) at (2,0) {};
  \fill {(w1)+(-0.07,-0.07)} rectangle (2.07,0.07);
  \node[nnode', label=below:{\scriptsize $\langle \neg z \wedge \neg \hat{K}_a (x_i \wedge \varphi), \neg x_i \rangle$}] (w2) at (5.5,0) {};
  \fill {(w2)+(-0.07,-0.07)} rectangle (5.57,0.07);
  \draw[-] (w0) -- (w1) node[label-edge] {\scriptsize $a$};
  \draw[-] (w1) -- (w2) node[label-edge] {\scriptsize $a$};
  \draw[-] (w0) -- (w2) node[label-edge] {\scriptsize $a$};
\end{tikzpicture}
\end{center}
\caption{The event model~$(\EEE'_i,e'_i)$,
used in the proof of Theorem~\ref{thm:delta2}.}
\label{fig:update2}
\end{figure}

%
We then let~$\chi = [\EEE_1,e_1]\dotsc[\EEE_n,e_n]
[\EEE'_1,e'_1]\dotsc[\EEE'_n,e'_n] \hat{K}_a x_n$.
We now formally show that the lexicographically maximal
assignment~$\alpha_0$ that satisfies~$\varphi$
sets~$x_n$ to true if and only if~$\MMM,w_0 \models \chi$.
In order to do so, we will prove the following claim.
The model~$(\MMM'',w'') = (\MMM,w_0) \otimes (\EEE_1,e_1) \otimes
\dotsm \otimes (\EEE_n,e_n) \otimes (\EEE'_1,e'_1) \otimes \dotsm \otimes (\EEE'_n,e'_n)$
consists of a world~$w''$ that sets~$z$ to true and all other variables
to false, and of worlds that set~$z$ to false and that agree with~$\alpha_0$ on
the variables~$x_1,\dotsc,x_n$.
Firstly, it is straightforward to verify that~$(\MMM',w') = (\MMM,w_0) \otimes
(\EEE_1,e_1) \otimes \dotsm \otimes (\EEE_n,e_n)$ consists of the
world~$w'$ and exactly one world corresponding to each truth
assignment~$\alpha$ to the variables~$x_1,\dotsc,x_n$.

Then, applying the update~$(\EEE'_1,e'_1)$ to~$(\MMM',w')$ has two
possible outcomes: either~(1) if there exists a model of~$\varphi$ that
sets~$x_1$ to true, then in all worlds (that set~$z$ to false)
the variable~$x_1$ will be set to true;
or~(2) if there exists no model of~$\varphi$ that sets~$x_1$ to true,
then in all worlds (that set~$z$ to false) the variable~$x_1$ will be
set to false.
For each~$1 < i \leq n$, subsequently applying the update~$(\EEE'_i,e'_i)$
has an entirely similar effect.
By a straightforward inductive argument, it then follows that
all the worlds in~$(\MMM'',w'')$ that set~$z$ to false agree
with the lexicographically maximal model of~$\varphi$.

Therefore,~$\MMM,w_0 \models \chi$
if and only
if~$x_n$ is true in the lexicographically maximal model of~$\varphi$,
and we can conclude that the problem is \DeltaP{2}-hard.
\qed\end{proof}

When we allow multi-pointed models, the problem turns
out to be \PSPACE{}-hard, even when restricted to a single agent
(Theorem~\ref{thm:pspace-hardness2}).
{
This hardness result adds to our understanding of the boundaries
of the algorithmically tractable fragment of Proposition~\ref{prop:ptime}.
Whereas Theorem~\ref{thm:delta2} showed that adding postconditions
leads to intractability, the following result shows that adding
multi-pointedness to the models instead (and having no further
additions) also leads to intractability.
In other words, the following result indicates that leaving the fragment
of Proposition~\ref{prop:ptime} by a different route also requires
giving up polynomial-time algorithms for model checking.
}

{
In the literature, \PSPACE{}-hardness results have been shown
for a setting that is similar to the one used in
Theorem~\ref{thm:pspace-hardness2}---i.e.,~%
\cite[Proposition~7.2]{VanEijckSchwarzentruber14}
and~\cite[Theorem~2]{AucherSchwarzentruber13}.
The difference is that these proofs in the literature depend
on particular features of the DEL setting---%
the result of \cite[Proposition~7.2]{VanEijckSchwarzentruber14}
depends on there being two agents,
and the result of \cite[Theorem~2]{AucherSchwarzentruber13}
depends on relations not being serial---%
whereas the result of
Theorem~\ref{thm:pspace-hardness2} holds also for the case
with both a single agent and S5 relations.
The proofs from the literature depends crucially on
there being two agents or non-serial relations---%
they are used to encode the quantifiers in a quantified Boolean formula.
The main technical hurdle that needs to be overcome to establish
Theorem~\ref{thm:pspace-hardness2} is to encode the quantification
of a quantified Boolean formula in DEL using a single agent
and using S5 relations.
We do so by starting with an S5 epistemic model~$\mathcal{M}$
that includes worlds that set different propositional
variables~$x_1,\dotsc,x_n$ to true,
and using multi-pointed event models to quantify over different
possibilities of deleting worlds from~$\mathcal{M}$.}

\begin{theorem}
\label{thm:pspace-hardness2}
The model checking problem for DEL with S5 models
restricted to instances with a single agent
and no postconditions in the event models,
but where models can be multi-pointed,
is \PSPACE{}-hard.
\end{theorem}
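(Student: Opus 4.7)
The plan is to give a polynomial-time reduction from \QSat{} restricted to the alternating pattern $\exists x_1 \forall x_2 \dotsc Q_n x_n.\,\psi$, which is \PSPACE{}-hard by the preliminaries. Given such a formula~$\Phi$, I will build a single-pointed S5 epistemic model~$(\MMM, w_0)$ with one agent~$a$ and a DEL-formula~$\chi$ that uses only multi-pointed S5 event models without postconditions, so that~$\MMM, w_0 \models \chi$ iff~$\Phi$ is true. The guiding idea is that one nested multi-pointed update suffices to simulate one quantifier: $\langle \EEE, E_d \rangle$ is existential over the designated events and $[\EEE, E_d]$ is universal, so alternating diamond/box updates mirror the quantifier prefix of~$\Phi$.

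For the initial model I take $n+1$ worlds $w_0, u_1, \dotsc, u_n$ placed in a single $a$-equivalence class, with fresh propositions $m_1, \dotsc, m_n$ and valuation $V(m_i) = \SBs u_i \SEs$ (so $w_0$ makes every $m_i$ false). For each~$i$, the event model~$(\EEE_i, E_i)$ has two designated events $e_i^0, e_i^1$ placed in \emph{separate} $a$-equivalence classes (each only reflexively related to itself), with preconditions $\ccfont{pre}(e_i^0) = \neg m_i$ and $\ccfont{pre}(e_i^1) = \top$, and with empty postconditions. Writing~$\psi'$ for the formula obtained from~$\psi$ by replacing each occurrence of~$x_i$ with $\hat{K}_a m_i$, I set
\[
  \chi \;=\; \mathbf{Q}_1 \, \mathbf{Q}_2 \dotsc \mathbf{Q}_n \, \psi',
\]
where $\mathbf{Q}_i$ denotes $\langle \EEE_i, E_i \rangle$ when $Q_i = \exists$ and $[\EEE_i, E_i]$ when $Q_i = \forall$.

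Correctness is proved by induction on the quantifier prefix, tracking the pointed world's equivalence class after~$k$ updates with any choice sequence $(b_1, \dotsc, b_k) \in \SBs 0, 1 \SEs^k$. Because $e_i^0$ and $e_i^1$ lie in distinct equivalence classes of~$\EEE_i$, this class consists of exactly those tuples $(u, e_1^{b_1}, \dotsc, e_k^{b_k})$ for which $u \in \SBs w_0, u_1, \dotsc, u_n \SEs$ satisfies every $\ccfont{pre}(e_j^{b_j})$. A direct check shows that~$w_0$ survives all updates (since $\neg m_i$ holds at~$w_0$), and that~$u_i$ survives iff $b_i = 1$: it fails $\neg m_i$ but, for $j \neq i$, passes both $\neg m_j$ (since $u_i \not\models m_j$) and the trivial precondition~$\top$. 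Consequently, at the terminal pointed world, $\hat{K}_a m_i$ holds iff the lifted copy of~$u_i$ is still present iff $b_i = 1$, so $\psi'$ evaluates to $\psi(b_1, \dotsc, b_n)$ and the inductive unfolding of the diamond/box prefix of~$\chi$ matches the truth of~$\Phi$.

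The main obstacle is the simultaneous bookkeeping: I must verify that the ``$u_i$-killing'' effect of $e_i^0$ is never accidentally triggered by updates for indices $j \neq i$ (which is why $u_i$ carries only~$m_i$ and $\ccfont{pre}(e_j^b)$ depends only on~$m_j$), and that the pointed world itself always satisfies both $\neg m_i$ and~$\top$ at every intermediate step (which holds because postconditions are empty and $w_0$ starts with all $m_i$ false), so that both designated events remain enabled at every level and no quantifier choice is silently dropped.
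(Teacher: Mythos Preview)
Your proposal is correct and is essentially the same reduction as the paper's own proof: the paper also starts from a single $a$-clique containing a designated world that makes no variable true together with one world per variable making exactly that variable true, uses for each~$i$ a two-event multi-pointed S5 model with disconnected designated events having preconditions~$\top$ and~$\neg x_i$ (your~$\neg m_i$), and forms~$\chi$ by alternating $\langle\cdot\rangle$ and $[\cdot]$ in front of~$\psi$ with each~$x_i$ replaced by~$\hat{K}_a x_i$. Your correctness argument (survival of~$u_i$ iff $b_i=1$, enabledness of both events at the pointed world throughout) matches the paper's inductive statement.
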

\begin{proof}
In order to show \PSPACE{}-hardness, we give a polynomial-time reduction
from the problem of deciding whether a quantified Boolean formula is true.
Let~$\varphi = \exists x_1 \forall x_2 \dotsc \exists x_{n-1} \forall x_{n}. \psi$
be a quantified Boolean formula, where~$\psi$ is quantifier-free
(we assume without loss of generality that~$n$ is even).
We construct a single-pointed epistemic model~$(\MMM,w_0)$
with one agent~$a$
and a DEL-formula~$\chi$ (containing updates with multi-pointed event
models) such that~$\MMM,w_0 \models \chi$ if and only
if~$\varphi$ is true.

The first main idea behind this reduction is that we represent truth assignments
to the propositional variables~$x_1,\dotsc,x_n$ with connected groups
of worlds.
Let~$\alpha$ be a truth assignment to the variables~$x_1,\dotsc,x_n$,
and let~$x_{i_1},\dotsc,x_{i_\ell}$ be the variables that~$\alpha$ sets
to true.
We then represent~$\alpha$ by means of a group of
worlds~$w_0,w_1,\dotsc,w_{\ell}$, where the world~$w_0$ makes no
propositional variable true,
and for each~$1 \leq j \leq \ell$, world~$w_j$ makes exactly
one propositional variable true (namely,~$x_{i_j}$).
These worlds~$w_0,w_1,\dotsc,w_{\ell}$ are fully connected.
This collection of worlds~$w_0,w_1,\dotsc,w_{\ell}$ is what we call
the \emph{group of worlds corresponding to~$\alpha$}.
Moreover, the designated state is~$w_0$.
Consider the truth assignment~$\alpha = \SBs
x_1 \mapsto 1, x_2 \mapsto 1, x_3 \mapsto 0, x_4 \mapsto 1 \SEs$,
for example.
In Figure~\ref{fig:group-example0} we show the group of
worlds that we use to represent this truth assignment~$\alpha$.
We let the model~$\MMM$ be the group of worlds
corresponding to the truth assignment~$\alpha_0$ that assigns
all variables~$x_1,\dotsc,x_n$ to true.

\begin{figure}[htp!]
\begin{center}
\begin{tikzpicture}
  \tikzstyle{dnode}=[inner sep=1pt,outer sep=1pt,draw,circle,minimum width=9pt]
  \tikzstyle{nnode}=[inner sep=1pt,outer sep=1pt,circle,minimum width=9pt]
  \tikzstyle{label-edge}=[midway,fill=white, inner sep=1pt]
  \node[dnode] (w0) at (0,2) {};
  \fill (w0) circle [radius=2pt];
  \node[nnode, label=above:{\scriptsize $x_1$}] (w1) at (3,2) {};
  \fill (w1) circle [radius=2pt];
  \node[nnode, label=below:{\scriptsize $x_2$}] (w2) at (0,0) {};
  \fill (w2) circle [radius=2pt];
  \node[nnode, label=below:{\scriptsize $x_4$}] (w3) at (3,0) {};
  \fill (w3) circle [radius=2pt];
  \draw[-] (w0) -- (w1) node[label-edge] {\scriptsize $a$};
  \draw[-] (w0) -- (w2) node[label-edge] {\scriptsize $a$};
  \draw[-] (w0) -- (w3) node[label-edge, near start] {\scriptsize $a$};
  \draw[-] (w1) -- (w2) node[label-edge, near start] {\scriptsize $a$};
  \draw[-] (w1) -- (w3) node[label-edge] {\scriptsize $a$};
  \draw[-] (w2) -- (w3) node[label-edge] {\scriptsize $a$};
\end{tikzpicture}
\end{center}
\caption{The group of worlds that we use to represent
the truth assignment~$\alpha = \SBs
x_1 \mapsto 1, x_2 \mapsto 1, x_3 \mapsto 0, x_4 \mapsto 1 \SEs$,
in the proof of Theorem~\ref{thm:pspace-hardness2}.}
\label{fig:group-example0}
\end{figure}
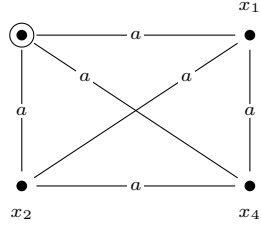

The next main idea is that we represent existential and universal
quantification of the propositional variables using the dynamic
operators~$\langle \EEE,E \rangle$ and~$[\EEE,E]$,
respectively.
For each propositional variable~$x_i$ in the quantified
Boolean formula,
we introduce the multi-pointed event model~$(\EEE_i,E_i)$
as depicted in Figure~\ref{fig:update-example0}.
We use the event models~$(\EEE_1,E_1),\dotsc,(\EEE_n,E_n)$,
to create (disconnected) groups of worlds (that all have a designated world) that correspond to each possible
truth assignment~$\alpha$ to the variables~$x_1,\dotsc,x_n$.

\begin{figure}[htp!]
\begin{center}
\begin{tikzpicture}
  \tikzstyle{dnode}=[inner sep=1pt,outer sep=1pt,draw,circle,minimum width=9pt]
  \tikzstyle{nnode}=[inner sep=1pt,outer sep=1pt,circle,minimum width=9pt]
  \tikzstyle{dnode'}=[inner sep=1pt,outer sep=1pt,draw,rectangle,minimum width=9pt,minimum height=9pt]
  \tikzstyle{nnode'}=[inner sep=1pt,outer sep=1pt,rectangle,minimum width=9pt,minimum height=9pt]
  \tikzstyle{label-edge}=[midway,fill=white, inner sep=1pt]
  \node[dnode', label=above:{\scriptsize $\langle \top, \top \rangle$}] (w0) at (0,0) {};
  \fill {(w0)+(-0.07,-0.07)} rectangle (0.07,0.07);
  \node[dnode', label=above:{\scriptsize $\langle \neg x_i, \top \rangle$}] (w1) at (2,0) {};
  \fill {(w1)+(-0.07,-0.07)} rectangle (2.07,0.07);
\end{tikzpicture}
\end{center}
\caption{The multi-pointed event model~$(\EEE_i,E_i)$
corresponding to variable~$x_i$,
used in the proof of Theorem~\ref{thm:pspace-hardness2}.}
\label{fig:update-example0}
\end{figure}
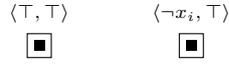

%

Using the alternation of diamond dynamic operators
and box dynamic operators, we can
simulate existential and universal quantification of variables 
in the formula~$\varphi$.
%
We simulate an existentially quantified variable~$\exists x_i$
by the dynamic operator~$\langle \EEE_i,E_i \rangle$---%
a formula of the form~$\langle \EEE_i,E_i \rangle \phi$ is true
if and only if~$\langle \EEE_i,e_i \rangle \phi$ is true for some~$e_i \in E_i$.
Similarly, we simulate a universally quantified variable~$\forall x_i$
by the dynamic operator~$[\EEE_i,E_i]$---%
a formula of the form~$[\EEE_i,E_i] \phi$ is true
if and only~$[\EEE_i,e_i] \phi$ is true for all~$e_i \in E_i$.

Concretely, we let~%
$\chi = \langle \EEE_1,E_1 \rangle [\EEE_2,E_2] \dotsc
\langle \EEE_{n-1}, E_{n-1} \rangle [\EEE_n,E_n] \chi'$,
where~$\chi'$ is the formula obtained from~$\psi$
by replacing each occurrence of a propositional variable~$x_i$
by the formula~$\hat{K}_a x_i$.


We show that~$\varphi$ is a true
quantified Boolean formula if and only if~$\MMM,w_0 \models \chi$.
In order to do so, we prove the following statement, relating
truth assignments~$\alpha$ to the variables~$x_1,\dotsc,x_n$ to
groups of worlds containing a designated world.
The statement that we will prove inductively
for all~$1 \leq i \leq n+1$ is the following.

\medskip
\noindent \textit{Statement:}
Let~$\alpha$ be any truth assignment to the variables~$x_1,\dotsc,x_n$
that sets all variables~$x_i,\dotsc,x_n$ to true,
and let~$\alpha'$ be the restriction of~$\alpha$ to the
variables~$x_1,\dotsc,x_{i-1}$.
Moreover, let~$M$ be a group of worlds that corresponds to the
truth assignment~$\alpha$, containing a designated world~$w$.
Then~$Q_i x_i \dotsc \exists x_{n-1} \forall x_n. \psi$ is true
under~$\alpha'$ if and only if:
\begin{itemize}
  \item $w$ makes $\langle \EEE_i,E_i \rangle \dotsc
    [\EEE_n,E_n] \chi'$ true, if~$i$ is odd; and
  \item $w$ makes $[\EEE_i,E_i] \dotsc
    [\EEE_n,E_n] \chi'$ true, if~$i$ is even.
\end{itemize}
\smallskip

The statement for~$i=1$ implies that~$\MMM,w_0 \models \chi$
if and only if~$\varphi$ is a true quantified Boolean formula.
We show that the statement holds for~$i=1$
by showing that the statement holds for all~$1 \leq i \leq n+1$.
We begin by showing that the statement holds for~$i = n+1$.
In this case, we know that~$\alpha = \alpha'$ is a truth assignment to the
variables~$x_1,\dotsc,x_n$.
Moreover, by construction of~$\chi'$ we know
that~$w$ makes~$\chi'$ true if and only if~$\alpha$ satisfies~$\psi$.
Therefore, the statement holds.

Next, we let~$1 \leq i \leq n$ be arbitrary, and we assume that the
statement holds for~$i+1$.
%
We now distinguish two cases: either~(1)~$Q_i = \exists$, i.e., the $i$-th
quantifier of~$\varphi$ is existential,
or~(2)~$Q_i = \forall$, i.e., the $i$-th quantifier of~$\varphi$
is universal.

First, consider case~(1).
Suppose that~$\exists x_i \dotsc \exists x_{n-1} \forall x_n. \psi$ is true
under~$\alpha'$.
Then there exists a truth assignment~$\alpha''$ to the
variables~$x_1,\dotsc,x_i$ that agrees with~$\alpha'$ on the
variables~$x_1,\dotsc,x_{i-1}$ and for which~$\forall x_{i+1} \dotsc
\exists x_{n-1} \forall x_n. \psi$ is true under~$\alpha''$.
Therefore, there exists some event~$e \in E_i$ such
that the group~$M' = \SB (v,e) \SM v \in M \mtext{ and } M,v \models \ccfont{pre}(e) \SE$
of worlds and
the world~$w' = (w,e)$, together with the assignment~$\alpha'''$
that agrees with~$\alpha''$ on the variables~$x_1,\dotsc,x_i$
and that sets the variables~$x_{i+1},\dotsc,x_n$ to true,
satisfy the requirements for the statement for~$i+1$.
Then, by the induction hypothesis we know
that~$w'$ makes~$[\EEE_{i+1},E_{i+1}] \dotsc \allowbreak
\langle \EEE_{n-1}, E_{n-1} \rangle [\EEE_n,E_n] \chi'$ true.
Therefore, we can conclude that~$w$
makes~$\langle \EEE_{i},E_{i} \rangle \allowbreak \dotsc
\langle \EEE_{n-1}, E_{n-1} \rangle [\EEE_n,E_n] \chi'$ true.

Conversely, suppose that~$w$
makes~$\langle \EEE_{i},E_{i} \rangle \dotsc
\langle \EEE_{n-1}, E_{n-1} \rangle [\EEE_n,E_n] \chi'$ true.
This can only be the case if there is some event~$e \in E_i$
such that the set~$M'$ and~$w'$ (defined as above)
correspond to a truth assignment~$\alpha'''$ (also defined as above).
Then, by the induction hypothesis, we know
that~$\forall x_{i+1} \dotsc \exists x_{n-1} \forall x_n. \psi$
is true under~$\alpha''$ (obtained from~$\alpha'''$ as above).
Therefore, since~$\alpha''$ extends~$\alpha'$,
we can conclude that~$\exists x_i \dotsc \exists x_{n-1}
\forall x_n. \psi$ is true under~$\alpha'$.

The argument for case~(2) is entirely analogous (yet dual).
We omit a detailed treatment of this case.
This concludes the inductive proof of the statement for
all~$1 \leq i \leq n+1$, and thus concludes our proof
that~$\MMM,w_0 \models \chi$ if and only
if~$\varphi$ is true.
\qed\end{proof}

\subsection{Hardness results for two agents}

Next, we show that when we consider the case of two agents,
the model checking problem for DEL is \PSPACE{}-hard,
even when we only allow single-pointed models
without postconditions (Theorem~\ref{thm:pspace-hardness}).
{
This hardness result adds yet another piece of understanding
of the boundaries
of the algorithmically tractable fragment of Proposition~\ref{prop:ptime}.
Namely, it shows that leaving the algorithmically tractable fragment
of Proposition~\ref{prop:ptime} by another one of the possible different
routes---that is, by adding a second agent only---%
leads to computational hardness.
In fact, this is the third of the three most obvious ways of extending
the fragment of Proposition~\ref{prop:ptime}.
Therefore, Theorem~\ref{thm:pspace-hardness}---together with 
Theorems~\ref{thm:delta2} and~\ref{thm:pspace-hardness2}---%
indicates that all individual restrictions in the fragment of
Proposition~\ref{prop:ptime} are necessary to obtain polynomial-time
solvability.

Additionally, the result of Theorem~\ref{thm:pspace-hardness} shows
that the inherent hardness in the model checking problem for DEL
with two agents holds even when we restrict
the setting to include only three propositional variables~$x_1,x_2,x_3$---%
in addition to having only two agents and single-pointed S5 models only.

Similarly to the case of Theorem~\ref{thm:pspace-hardness2},
the result of Theorem~\ref{thm:pspace-hardness} differs from similar results
from the literature---i.e.,~%
\cite[Proposition~7.2]{VanEijckSchwarzentruber14}
and~\cite[Theorem~2]{AucherSchwarzentruber13}---%
in that it considers different restrictions than these results.
In particular, the former result \cite[Proposition~7.2]{VanEijckSchwarzentruber14}
requires the use of multi-pointed models,
and the latter result \cite[Theorem~2]{AucherSchwarzentruber13} requires
non-serial relations in the models
(and these proofs crucially depend on these features).
In the literature, there are also \PSPACE{} hardness results for the case
of single-pointed S5 models \cite{VandePol15,VandePolVanRooijSzymanik18}.
However, these results hold for the case where an unbounded number of agents
are used---and the proofs given in the literature
crucially depend on the number of agents not being bounded.
The result of Theorem~\ref{thm:pspace-hardness} holds for the case
with only two agents and where only single-pointed S5 models are allowed.

The main technical hurdle that needs to be overcome to establish
Theorem~\ref{thm:pspace-hardness} is to encode the differently quantified
variables of a quantified Boolean formula using single-pointed S5 models
and using only two agents and three propositional variables.
We do so---roughly---by (i)~representing propositional variables
by alternating chains of worlds where the end of the chain is marked by
a designated propositional variable~$z_0$,%
\footnote{ This technique was previously used by Van de Pol, Van Rooij, and
Szymanik in a hardness proof for DEL model checking
\cite[Proposition~3]{VandePolVanRooijSzymanik18}.}
(ii)~representing quantification over different variables in the quantified
Boolean formula using S5 event models where different alternating chains
of relations lead to different copies of the original model that represent
different truth assignments to the variables of the quantified Boolean formula,
and (iii)~by introducing DEL formulas that interact appropriately
with the gadgets of~(i) and~(ii), making use of only two
auxiliary variables~$z_1$ and~$z_2$.
{
The main challenge in~(i)--(iii) is in the details of
the intricate construction of
the DEL formulas and in the detailed argument that they interact
in exactly the right way with the event models---this is why the
proof of Theorem~\ref{thm:pspace-hardness} is of considerable length.
}
}

\begin{theorem}
\label{thm:pspace-hardness}
The model checking problem for DEL is \PSPACE{}-hard,
even when restricted to the case where the question is
whether~$\MMM,w_0 \models [\EEE_1,e_1]\dotsc[\EEE_n,e_n] \chi$,
where:
\begin{itemize}
  \item the model~$(\MMM,w_0)$ is a single-pointed S5 model;
  \item all the~$(\EEE_i,e_i)$ are single-pointed S5 event models
    without postconditions;
  \item $\chi$ is an epistemic formula without update modalities
    that contains
    (multiple occurrences of)
    only three propositional variables; and
  \item there are only two agents.
\end{itemize}
\end{theorem}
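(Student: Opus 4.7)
The plan is to reduce from the $\PSPACE{}$-complete problem of deciding truth of alternating quantified Boolean formulas $\varphi = \exists x_1 \forall x_2 \cdots \forall x_n.\psi$, adapting the overall strategy of Theorem~\ref{thm:pspace-hardness2} but with two substantive refinements: (i)~the multi-pointedness of the event models is replaced by the use of the second agent $b$, so that the resulting event models are single-pointed; and (ii)~the encoding of a truth assignment is compressed so that the post-update epistemic formula $\chi$ uses only three propositional variables $p,q,r$.

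The key simulation is as follows. Where the previous proof used a multi-pointed event model $(\EEE_i, E_i)$ with $E_i = \{e_{i,0}, e_{i,1}\}$ to ``guess'' the value of $x_i$, I would now use a single-pointed event model $(\EEE_i, e_{i,0})$ in which $e_{i,0}$ and $e_{i,1}$ are $R_b$-related (but $R_a$-unrelated). After the product update, the designated world corresponds to ``$x_i = 0$'' but is $R_b$-indistinguishable from a sibling world corresponding to ``$x_i = 1$''; iterating this construction produces, after $n$ updates, a single-pointed S5 model whose $R_b$-reachable substructure from the designated world branches $2^n$ ways, with each leaf encoding a full assignment to $x_1,\dotsc,x_n$. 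Existential quantification over $x_i$ is then simulated in $\chi$ by $\hat{K}_b$ and universal quantification by $K_b$, producing a formula of shape $\chi = [\EEE_1,e_{1,0}]\dotsm[\EEE_n,e_{n,0}]\, \chi'$ where $\chi'$ contains an alternating prefix of $\hat{K}_b$ and $K_b$ operators followed by a propositional translation of $\psi$.

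The three-variable constraint is the source of the delicacy. Rather than storing $x_i$'s value in a dedicated propositional symbol (as in the previous proof), I would maintain, $R_a$-attached to the designated world, a chain of $n$ ``register'' worlds, one per variable. The event $\EEE_i$ duplicates this chain and uses the three propositions $p,q,r$ so that (a)~the new register world encodes the chosen bit for $x_i$ (via $p$), (b)~a ``position tag'' (via $q$) distinguishes the register that was just set from the earlier ones, and (c)~a ``read flag'' (via $r$) is later used by $\chi'$ to locate exactly the right register when simulating an occurrence of the literal $x_i$ in $\psi$. Since postconditions are forbidden, the events realize these effects by adding worlds with the desired valuations rather than by overwriting. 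In $\chi'$, each literal $x_i$ of $\psi$ is translated into a small epistemic formula using $K_a$ and $\hat{K}_a$ on the three propositional markers to pick out the $i$-th register and inspect its $p$-value; this translation has size polynomial in $n$.

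Correctness is then established by an induction on the quantifier prefix, essentially identical in shape to the statement inductively proved in Theorem~\ref{thm:pspace-hardness2}: for each $1 \le i \le n+1$ and each partial assignment $\alpha'$ to $x_1,\dotsc,x_{i-1}$, the tail $Q_i x_i \cdots \forall x_n.\psi$ holds under $\alpha'$ iff the designated world of the model obtained after the first $i-1$ updates satisfies the corresponding tail of $\chi$. The main obstacle I anticipate is the three-variable bookkeeping: ensuring that the cumulative effect of the $n$ updates leaves the $R_a$- and $R_b$-reachable substructure in a configuration where (a)~the $R_b$-branching cleanly reflects the $2^n$ assignments and (b)~the literal-translation sub-formulas in $\chi'$ unambiguously identify the intended register using only $p,q,r$ as propositional landmarks. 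The rest of the argument — verifying the S5 properties of the produced event models, the polynomial size of the reduction, and the inductive correctness step — should proceed along the lines of the previous proof.
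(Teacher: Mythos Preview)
Your proposal has a genuine gap in the quantifier-simulation step. You intend to encode~$\exists x_i$ and~$\forall x_i$ by an alternating prefix of~$\hat{K}_b$ and~$K_b$ operators in~$\chi'$, evaluated in the model obtained after all~$n$ product updates. But in that model, since each~$\EEE_i$ has~$e_{i,0} R_b e_{i,1}$, any two worlds~$(w_0,c_1,\dotsc,c_n)$ and~$(w_0,c'_1,\dotsc,c'_n)$ with~$c_j,c'_j \in \{e_{j,0},e_{j,1}\}$ are~$R_b$-related: the entire collection of~$2^n$ assignment-encoding worlds lies in a single~$R_b$-equivalence class. In S5, nested modalities for a single agent collapse ($\hat{K}_b K_b\phi \equiv K_b\phi$ and~$K_b\hat{K}_b\phi \equiv \hat{K}_b\phi$), so a pure prefix~$\hat{K}_b K_b \hat{K}_b \cdots$ reduces to its innermost operator. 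Even with propositional guards inserted between the operators, a single~$\hat{K}_b$ or~$K_b$ at any nesting depth ranges over \emph{all}~$2^n$ assignments at once; there is no way to hold the outer choices~$x_1,\dotsc,x_{i-1}$ fixed while quantifying only over~$x_i$, because the formula cannot refer to the (quantified-over) values chosen at outer levels without an exponential case split. This is exactly why the paper's construction is more elaborate: each event model~$(\EEE_i,e_i)$ has five events using \emph{both}~$a$- and~$b$-relations, the navigation formula alternates agents via~$\hat{K}_b\hat{K}_a$ and~$K_bK_a$, and a separate counter mechanism (additional ``$z_2$-chains'' in the initial model, together with level guards of the form~$\bigwedge_{j\leq i}\neg\hat{K}_b\chi'_j \wedge \bigwedge_{j>i}\hat{K}_b\chi'_j$) pins down which quantifier level is currently being processed.

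A secondary issue concerns your register mechanism: you say the events ``realize these effects by adding worlds with the desired valuations.'' Without postconditions, product update cannot introduce worlds with new valuations; the world~$(w,e)$ inherits~$w$'s valuation exactly. Updates can only replicate and filter existing worlds via preconditions, so any ``register'' values must already be present as valuations in the initial model~$\MMM$ and then be \emph{selected}, not created, by the updates. The paper handles this by placing in~$\MMM$, for each variable~$x_j$, an alternating~$a$/$b$ chain of length~$j+1$ whose presence encodes~$x_j=1$, and by using preconditions in~$\EEE_i$ to delete the head of the length-$i$ chain in one branch; the three variables~$z_0,z_1,z_2$ serve only as positional markers along these chains, and variable identity is recovered from chain length rather than from a dedicated proposition.
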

\begin{proof}
In order to show \PSPACE{}-hardness, we give a polynomial-time reduction
from the problem of deciding whether a quantified Boolean formula is true.
Let~$\varphi = \exists x_1 \forall x_2 \dotsc \exists x_{n-1} \forall x_{n}. \psi$
be a quantified Boolean formula, where~$\psi$ is quantifier-free.
We construct an epistemic model~$(\MMM,w_0)$ with two agents~$a,b$
and a DEL-formula~$\xi$ such that~$\MMM,w_0 \models \xi$ if and only
if~$\varphi$ is true.


The first main idea behind the reduction is that we use two propositional
variables, say~$z_0$ and~$z_1$, to represent an arbitrary number of propositions,
by creating chains of worlds that represent these propositions.
Let~$x_1,\dotsc,x_n$ be the propositions that we want to represent.
Then we represent a proposition~$x_j$ by a chain of worlds of
length~$j+1$
that are connected alternatingly by $b$-relations and $a$ relations.
In this chain, the last world is the only world that makes~$z_0$ true.
Moreover, the first world in the chain is the only world
that makes~$z_1$ true.
An example of such a chain that we use to represent proposition~$x_3$ 
can be found in Figure~\ref{fig:chain-example}.

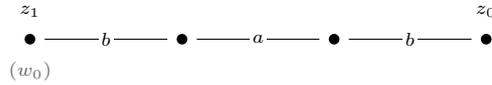
\begin{figure}[htp!]
\begin{center}
\begin{tikzpicture}
  \tikzstyle{dnode}=[inner sep=1pt,outer sep=1pt,draw,circle,minimum width=9pt]
  \tikzstyle{nnode}=[inner sep=1pt,outer sep=1pt,circle,minimum width=9pt]
  \tikzstyle{label-edge}=[midway,fill=white, inner sep=1pt]
  \node[nnode, label=below:{\scriptsize \textcolor{black!60}{$(w_0)$}}, label=above:{\scriptsize $z_1$}] (w0) at (0,0) {};
  \fill (w0) circle [radius=2pt];
  \node[nnode] (w1) at (2,0) {};
  \fill (w1) circle [radius=2pt];
  \node[nnode] (w2) at (4,0) {};
  \fill (w2) circle [radius=2pt];
  \node[nnode, label=above:{\scriptsize $z_0$}] (w3) at (6,0) {};
  \fill (w3) circle [radius=2pt];
  \draw[-] (w0) -- (w1) node[label-edge] {\scriptsize $b$};
  \draw[-] (w1) -- (w2) node[label-edge] {\scriptsize $a$};
  \draw[-] (w2) -- (w3) node[label-edge] {\scriptsize $b$};
\end{tikzpicture}
\end{center}
\caption{The chain of worlds that we use to represent proposition~$x_3$
in the proof of Theorem~\ref{thm:pspace-hardness}.
The first world in the chain is the world~$w_0$, that is depicted
on the left.}
\label{fig:chain-example}
\end{figure}

The next idea that plays an important role in the reduction is that we will
group together (the first worlds) of several such chains to represent
a truth assignment to the propositions~$x_1,\dotsc,x_n$.
Let~$\alpha$ be a truth assignment to the propositions~$x_1,\dotsc,x_n$,
and let~$x_{i_1},\dotsc,x_{i_\ell}$ be the propositions that~$\alpha$
sets to true (for~$1 \leq i_1 < \dotsm < i_{\ell} \leq n$).
Then we represent the truth assignment~$\alpha$ in the following way.
We take the chains corresponding to the
propositions~$x_{i_1},\dotsc,x_{i_\ell}$,
and we connect the first world (the world that is labelled with~$w_0$ in
Figure~\ref{fig:chain-example}) of each two of these chains with
an $a$-relation.
In other words, we connect all these first worlds together in a fully connected
clique of $a$-relations.
Moreover, to this $a$-clique of worlds, we add a designated world
where both~$z_1$ and a third propositional variable~$z_2$ are true.
The collection of all worlds in the chains corresponding to
the propositions~$x_{i_1},\dotsc,x_{i_\ell}$ and this additional
designated world
is what we call the \emph{group of worlds
representing~$\alpha$}.
For the sake of convenience, we call the world where~$z_1$
and~$z_2$ are true the \emph{central world} of the group of worlds.
In Figure~\ref{fig:group-example} we give an example of such a group of
worlds that we use to represent the truth assignment~$\alpha = \SBs
x_1 \mapsto 1, x_2 \mapsto 1, x_3 \mapsto 0, x_4 \mapsto 1 \SEs$.

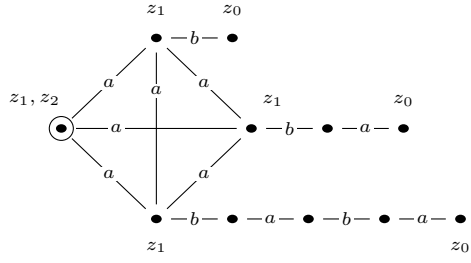
\begin{figure}[htp!]
\begin{center}
\begin{tikzpicture}[yscale=0.8]
  \tikzstyle{dnode}=[inner sep=1pt,outer sep=1pt,draw,circle,minimum width=9pt]
  \tikzstyle{nnode}=[inner sep=1pt,outer sep=1pt,circle,minimum width=9pt]
  \tikzstyle{label-edge}=[midway,fill=white, inner sep=1pt]
  \node[dnode, label={[xshift=-10pt]above:{\scriptsize $z_1,z_2$}}] (z) at (-1.25,-1.5) {};
  \fill (z) circle [radius=2pt];
  \node[nnode, label=above:{\scriptsize $z_1$}] (u0) at (0,0) {};
  \fill (u0) circle [radius=2pt];
  \node[nnode, label=above:{\scriptsize $z_0$}] (u1) at (1,0) {};
  \fill (u1) circle [radius=2pt];
  \draw[-] (u0) -- (u1) node[label-edge] {\scriptsize $b$};
  \node[nnode, label={[xshift=8pt]above:{\scriptsize $z_1$}}] (v0) at (1.25,-1.5) {};
  \fill (v0) circle [radius=2pt];
  \node[nnode] (v1) at (2.25,-1.5) {};
  \fill (v1) circle [radius=2pt];
  \node[nnode, label=above:{\scriptsize $z_0$}] (v2) at (3.25,-1.5) {};
  \fill (v2) circle [radius=2pt];
  \draw[-] (v0) -- (v1) node[label-edge] {\scriptsize $b$};
  \draw[-] (v1) -- (v2) node[label-edge] {\scriptsize $a$};
  \node[nnode, label=below:{\scriptsize $z_1$}] (w0) at (0,-3) {};
  \fill (w0) circle [radius=2pt];
  \node[nnode] (w1) at (1,-3) {};
  \fill (w1) circle [radius=2pt];
  \node[nnode] (w2) at (2,-3) {};
  \fill (w2) circle [radius=2pt];
  \node[nnode] (w3) at (3,-3) {};
  \fill (w3) circle [radius=2pt];
  \node[nnode, label=below:{\scriptsize $z_0$}] (w4) at (4,-3) {};
  \fill (w4) circle [radius=2pt];
  \draw[-] (w0) -- (w1) node[label-edge] {\scriptsize $b$};
  \draw[-] (w1) -- (w2) node[label-edge] {\scriptsize $a$};
  \draw[-] (w2) -- (w3) node[label-edge] {\scriptsize $b$};
  \draw[-] (w3) -- (w4) node[label-edge] {\scriptsize $a$};
  \draw[-] (u0) -- (v0) node[label-edge] {\scriptsize $a$};
  \draw[-] (u0) -- (w0) node[label-edge, near start] {\scriptsize $a$};
  \draw[-] (v0) -- (w0) node[label-edge] {\scriptsize $a$};
  \draw[-] (z) -- (u0) node[label-edge] {\scriptsize $a$};
  \draw[-] (z) -- (v0) node[label-edge, near start] {\scriptsize $a$};
  \draw[-] (z) -- (w0) node[label-edge] {\scriptsize $a$};
\end{tikzpicture}
\end{center}
\caption{The group of worlds that we use to represent the truth
assignment~$\alpha = \SBs
y_1 \mapsto 1, y_2 \mapsto 1, y_3 \mapsto 0, y_4 \mapsto 1 \SEs$
in the proof of Theorem~\ref{thm:pspace-hardness}.}
\label{fig:group-example}
\end{figure}

By using the expressivity of epistemic logic, we can construct formulas
that extract information from these representations of truth assignments.
Intuitively, we can check whether a truth assignment~$\alpha$
sets a proposition~$x_j$ to true by checking whether the group of worlds
representing~$\alpha$ contains a chain of worlds of length exactly~$j+1$.
Formally, we will define a formula~$\chi_j$ for each~$1 \leq j \leq n$,
which is true in the designated world
if and only if the group contains a chain representing proposition~$x_j$.
We describe how to construct the formulas~$\chi_j$.
Firstly, we inductively define formulas~$\chi^{a}_{j}$
and~$\chi^{b}_{j}$, for all~$1 \leq j \leq n$ as follows.
Intuitively, the formula~$\chi^{a}_{j}$ is true in exactly
those worlds
from which there is an alternating chain that ends in a $z_0$-world,
that is of length at least~$j$
and that starts with an $a$-relation.
Similarly, the formula~$\chi^{b}_{j}$ is true in exactly
those worlds
from which there is an alternating chain that ends in a $z_0$-world,
that is of length at least~$j$,
and that starts with a $b$-relation.
We let~$\chi^{a}_{0} = \chi^{b}_{0} = z_0$,
and for each~$j > 0$, we let~$\chi^{b}_{j} = \hat{K}_{b}
(\neg z_1 \wedge \neg z_2 \wedge \chi^{a}_{j-1})$
and~$\chi^{a}_{j} = \hat{K}_{a}
(\neg z_1 \wedge \neg z_2 \wedge \chi^{b}_{j-1})$.
Now, using the formulas~$\chi^{b}_j$,
we can define the formulas~$\chi_j$.
We let~$\chi_j = z_1 \wedge \neg z_2 \wedge \chi^{b}_j \wedge \neg\chi^{b}_{j-1}$.
As a result of this definition,
the formula~$\chi_j$ is true in exactly those worlds
that are the first world of a chain of length exactly~$j$.

For example, consider the formula~$\chi_2 = z_1 \wedge \neg z_2 \wedge
\hat{K}_b (\neg z_1 \wedge \neg z_2 \wedge
\hat{K}_a (\neg z_1 \wedge \neg z_2 \wedge z_0)) \wedge
\neg \hat{K}_b (\neg z_1 \wedge \neg z_2 \wedge z_0)$
and consider the group of worlds depicted in
Figure~\ref{fig:group-example}. This formula is true only in the
first world of the chain of length~$2$.

The epistemic model~$(\MMM,w_0)$ that we use in the
reduction is based on the model~$\MMM_{\alpha_0}$
representing the truth
assignment~$\alpha_0 : \SBs x_1,\dotsc,x_n \SEs \rightarrow
\SBs 0,1 \SEs$ that sets all propositions~$x_1,\dotsc,x_n$ to true.
To obtain~$\MMM$, we will add a number of additional worlds
to the model~$\MMM_{\alpha_0}$,
that we will use to simulate the behavior of the
existential and universal quantifiers in the DEL-formula~$\chi$
that we will construct below.
Specifically, we will add alternating chains of worlds to the
model that are similar to the chains that represent the
propositions~$x_1,\dotsc,x_n$.
However, the additional chains that we add differ in two aspects
from the chains that represent the propositions~$x_1,\dotsc,x_n$:
(1)~the additional chains start
with an $a$-relation instead of starting with a $b$-relation,
and (2)~in the first world of the additional chains, the
propositional variable~$z_2$ is true instead of the variable~$z_1$.
For each~$1 \leq i \leq n$, we add such an additional chain of
length~$i$, and we connect the first worlds of these additional chains,
together with the designated world, in a clique of $b$-relations.
To illustrate this,
the model~$(\MMM,w_0)$ that results from this construction
is shown in Figure~\ref{fig:initial-model-for-three-vars},
for the case where~$n = 3$.
For the sake of convenience, we will denote these additional
chains by \emph{$z_2$-chains}, and the chains that represent
the propositions~$x_1,\dotsc,x_n$ by \emph{$z_1$-chains}
(after the propositional variables that are true in the first worlds
of these chains).

To check whether an alternating chain of length
exactly~$j$, that starts from a $z_2$-world
with an $a$-relation, is present in the model,
we define formulas~$\chi'_j$
similarly to the way we defined
the formulas~$\chi_j$. Specifically, we
let~$\chi'_j = \neg z_1 \wedge z_2 \wedge \chi^{a}_j \wedge
\neg\chi^{a}_{j-1}$.

We will use the $z_2$-chains 
together with the formulas~$\chi'_j$
to keep track of an additional counter.
We will use this counter 
as a technical trick to implement
the simulation of existentially and universally quantified variables
in the 
formula~$\varphi$. 

\begin{figure}[htp!]
\begin{center}
\begin{tikzpicture}
  \tikzstyle{dnode}=[inner sep=1pt,outer sep=1pt,draw,circle,minimum width=9pt]
  \tikzstyle{nnode}=[inner sep=1pt,outer sep=1pt,circle,minimum width=9pt]
  \tikzstyle{label-edge}=[midway,fill=white, inner sep=1pt]
  \node[nnode, label=above:{\scriptsize $z_1$}] (u0) at (0.25,0) {};
  \fill (u0) circle [radius=2pt];
  \node[nnode, label=above:{\scriptsize $z_0$}] (u1) at (1.25,0) {};
  \fill (u1) circle [radius=2pt];
  \draw[-] (u0) -- (u1) node[label-edge] {\scriptsize $b$};
  \node[nnode, label={[xshift=5pt]above:{\scriptsize $z_1$}}] (v0) at (1.5,-1.5) {};
  \fill (v0) circle [radius=2pt];
  \node[nnode] (v1) at (2.5,-1.5) {};
  \fill (v1) circle [radius=2pt];
  \node[nnode, label=above:{\scriptsize $z_0$}] (v2) at (3.5,-1.5) {};
  \fill (v2) circle [radius=2pt];
  \draw[-] (v0) -- (v1) node[label-edge] {\scriptsize $b$};
  \draw[-] (v1) -- (v2) node[label-edge] {\scriptsize $a$};
  \node[nnode, label={below:{\scriptsize $z_1$}}] (w0) at (0.25,-3) {};
  \fill (w0) circle [radius=2pt];
  \node[nnode] (w1) at (1.25,-3) {};
  \fill (w1) circle [radius=2pt];
  \node[nnode] (w2) at (2.25,-3) {};
  \fill (w2) circle [radius=2pt];
  \node[nnode, label=below:{\scriptsize $z_0$}] (w3) at (3.25,-3) {};
  \fill (w3) circle [radius=2pt];
  \draw[-] (w0) -- (w1) node[label-edge] {\scriptsize $b$};
  \draw[-] (w1) -- (w2) node[label-edge] {\scriptsize $a$};
  \draw[-] (w2) -- (w3) node[label-edge] {\scriptsize $b$};
  \draw[-] (u0) -- (v0) node[label-edge] {\scriptsize $a$};
  \draw[-] (u0) -- (w0) node[label-edge, near start] {\scriptsize $a$};
  \draw[-] (v0) -- (w0) node[label-edge] {\scriptsize $a$};
  \node[dnode, label={[yshift=7pt]above:{\scriptsize $z_1,z_2$}}] (z) at (-1,-1.5) {};
  \fill (z) circle [radius=2pt];
  \draw[-] (z) -- (u0) node[label-edge, near end] {\scriptsize $a$};
  \draw[-] (z) -- (v0) node[label-edge, near end] {\scriptsize $a$};
  \draw[-] (z) -- (w0) node[label-edge] {\scriptsize $a$};
  \node[nnode, label=above:{\scriptsize $z_2$}] (z0) at (-2.25,0) {};
  \fill (z0) circle [radius=2pt];
  \node[nnode, label={[xshift=-5pt]above:{\scriptsize $z_2$}}] (z1) at (-3.5,-1.5) {};
  \fill (z1) circle [radius=2pt];
  \node[nnode, label=below:{\scriptsize $z_2$}] (z2) at (-2.25,-3) {};
  \fill (z2) circle [radius=2pt];
  \draw[-] (z) -- (z0) node[label-edge, near end] {\scriptsize $b$};
  \draw[-] (z) -- (z1) node[label-edge, near end] {\scriptsize $b$};
  \draw[-] (z) -- (z2) node[label-edge] {\scriptsize $b$};
  \draw[-] (z0) -- (z1) node[label-edge] {\scriptsize $b$};
  \draw[-] (z1) -- (z2) node[label-edge] {\scriptsize $b$};
  \draw[-] (z0) -- (z2) node[label-edge, near start] {\scriptsize $b$};
  \node[nnode, label=above:{\scriptsize $z_0$}] (z0') at (-3.25,0) {};
  \fill (z0') circle [radius=2pt];
  \draw[-] (z0) -- (z0') node[label-edge] {\scriptsize $a$};
  \node[nnode] (z1') at (-4.5,-1.5) {};
  \fill (z1') circle [radius=2pt];
  \node[nnode, label=above:{\scriptsize $z_0$}] (z1'') at (-5.5,-1.5) {};
  \fill (z1'') circle [radius=2pt];
  \draw[-] (z1) -- (z1') node[label-edge] {\scriptsize $a$};
  \draw[-] (z1') -- (z1'') node[label-edge] {\scriptsize $b$};
  \node[nnode, label=below:{\scriptsize $z_1$}] (z2') at (-3.25,-3) {};
  \fill (z2') circle [radius=2pt];
  \node[nnode] (z2'') at (-4.25,-3) {};
  \fill (z2'') circle [radius=2pt];
  \node[nnode, label=below:{\scriptsize $z_0$}] (z2''') at (-5.25,-3) {};
  \fill (z2''') circle [radius=2pt];
  \draw[-] (z2) -- (z2') node[label-edge] {\scriptsize $a$};
  \draw[-] (z2') -- (z2'') node[label-edge] {\scriptsize $b$};
  \draw[-] (z2'') -- (z2''') node[label-edge] {\scriptsize $a$};
\end{tikzpicture}
\end{center}
\caption{The epistemic model~$(\MMM,w_0)$
for the case where~$n = 3$, as used
in the proof of Theorem~\ref{thm:pspace-hardness}.}
\label{fig:initial-model-for-three-vars}
\end{figure}
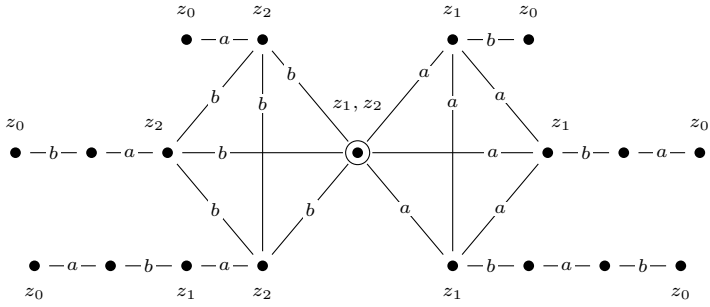

Next, we describe how we can generate all possible
truth assignments over
the variables~$x_1,\dotsc,x_n$ from the initial model~$\MMM$.
We do this in such a way that we
can afterwards express the existential and universal
quantifications of the formula~$\varphi$
using modal operators in the epistemic language.
In order to generate groups of worlds that represent truth
assignments~$\alpha$ that differ from the all-ones assignment~$\alpha_0$,
we will apply updates that copy the existing worlds
but that eliminate
(the first worlds of) chains of a certain length.
This is the third main idea behind this reduction.

Specifically, we will introduce a single-pointed event
model~$(\EEE_i,e_i)$ for each propositional variable~$x_i$,
that is depicted in Figure~\ref{fig:update-xi}.
Intuitively, what happens when the update~$(\EEE_i,e_i)$ is applied
is the following.
All existing groups of worlds will be duplicated, resulting in five copies%
---this corresponds to the five
events~$f^1_i,\dotsc,f^5_i$ in the event model.
The resulting groups of worlds will be connected corresponding to the
relations between the events in the event model. That is, for any existing
group of worlds, three of its copies (corresponding to the
events~$f^1_i$,~$f^2_i$, and~$f^3_i$) will be connected by $b$-relations.
The second and third of these copies (the ones corresponding
to the events~$f^2_i$ and~$f^3_i$) will be connected by $a$-relations to the
fourth and fifth copy (corresponding to the events~$f^4_i$ and~$f^5_i$),
respectively.
Moreover, in the fourth and fifth copy,
(the first world of) the $z_2$-chain of length~$i$ is removed,
and in the fifth copy, (the first world of) the $z_1$-chain of
length~$i$ is removed as well.
These effects of removing (the first worlds of) chains
is enforced by the preconditions
of the events in the event model.

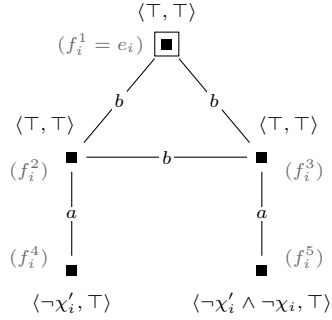
\begin{figure}[htp!]
\begin{center}
\begin{tikzpicture}
  \tikzstyle{dnode}=[inner sep=1pt,outer sep=1pt,draw,circle,minimum width=9pt]
  \tikzstyle{nnode}=[inner sep=1pt,outer sep=1pt,circle,minimum width=9pt]
  \tikzstyle{dnode'}=[inner sep=1pt,outer sep=1pt,draw,rectangle,minimum width=9pt,minimum height=9pt]
  \tikzstyle{nnode'}=[inner sep=1pt,outer sep=1pt,rectangle,minimum width=9pt,minimum height=9pt]
  \tikzstyle{label-edge}=[midway,fill=white, inner sep=1pt]
  \node[dnode', label={[xshift=0pt, yshift=0pt]left:{\textcolor{black!60}{\scriptsize $(f^1_i = e_i)$}}}, label=above:{\scriptsize $\langle \top,\top \rangle$}] (v0) at (3.25,1.5) {};
  \fill {(v0)+(-0.07,-0.07)} rectangle (3.32,1.57);
  \node[nnode', label={[yshift=-5pt]left:{\textcolor{black!60}{\scriptsize $(f^2_i)$}}}, label={[xshift=-10pt]above:{\scriptsize $\langle \top, \top \rangle$}}] (v1) at (2,0) {};
  \fill {(v1)+(-0.07,-0.07)} rectangle (2.07,0.07);
  \node[nnode', label={[yshift=-5pt]right:{\textcolor{black!60}{\scriptsize $(f^3_i)$}}}, label={[xshift=10pt]above:{\scriptsize $\langle \top, \top \rangle$}}] (v2) at (4.5,0) {};
  \fill {(v2)+(-0.07,-0.07)} rectangle (4.57,0.07);
  \node[nnode', label={[yshift=5pt]left:{\textcolor{black!60}{\scriptsize $(f^4_i)$}}}, label=below:{\scriptsize $\langle \neg\chi'_{i}, \top \rangle$}] (w1) at (2,-1.5) {};
  \fill {(w1)+(-0.07,-0.07)} rectangle (2.07,-1.43);
  \node[nnode', label={[yshift=5pt]right:{\textcolor{black!60}{\scriptsize $(f^5_i)$}}}, label=below:{\scriptsize $\langle \neg\chi'_{i} \wedge \neg\chi_{i}, \top \rangle$}] (w2) at (4.5,-1.5) {};
  \fill {(w2)+(-0.07,-0.07)} rectangle (4.57,-1.43);
  \draw[-] (v0) -- (v1) node[label-edge] {\scriptsize $b$};
  \draw[-] (v0) -- (v2) node[label-edge] {\scriptsize $b$};
  \draw[-] (v1) -- (v2) node[label-edge] {\scriptsize $b$};
  \draw[-] (v1) -- (w1) node[label-edge] {\scriptsize $a$};
  \draw[-] (v2) -- (w2) node[label-edge] {\scriptsize $a$};
\end{tikzpicture}
\end{center}
\caption{The event model~$(\EEE_i,e_i)$ corresponding to
the propositional variable~$x_i$,
used in the proof of Theorem~\ref{thm:pspace-hardness}.
The events are labelled~$f^1_i,\dotsc,f^5_i$.}
\label{fig:update-xi}
\end{figure}

By applying the updates~$(\EEE_1,e_1),\dotsc,(\EEE_n,e_n)$, we generate
many (in fact, an exponential number of)
copies of the model~$\MMM$, in each of which certain chains of
worlds are removed, and which are connected to each other by means of
$a$-relations and $b$-relations in the way described in the previous paragraph.
In particular, for each truth assignment~$\alpha$ to the
propositions~$x_1,\dotsc,x_n$,
there is some group of worlds that corresponds to~$\alpha$.

Finally, we construct the DEL-formula~$\xi$.
We let~$\xi = [\EEE_1,e_1]\dotsc[\EEE_n,e_n] \xi_{1}$,
where we define~$\xi_{1}$ below.
The formula~$\xi_{1}$ exploits the structure of the epistemic
model~$\MMM'$,
that results from updating the model~$\MMM$
with the updates~$(\EEE_i,e_i)$,
to simulate the semantics of the quantified Boolean formula~$\varphi$.
For each~$1 \leq i \leq n+1$, we define~$\xi_{i}$ inductively
as follows:
\[ \begin{array}{r l}
  \xi_{i} = & \begin{dcases*}
    \psi' & if~$i = n+1$, \\
    \hat{K}_b \hat{K}_a (z_1 \wedge z_2 \wedge \bigwedge\limits_{\mathllap{1 \leq} j \mathrlap{\leq i}} \neg\hat{K}_b \chi'_{j} \wedge \bigwedge\limits_{\mathllap{i <} j \mathrlap{\leq n}} \hat{K}_b \chi'_{j} \wedge \xi_{i+1}) &
      for odd~$i \leq n$, \\
    K_b K_a ((z_1 \wedge z_2 \wedge \bigwedge\limits_{\mathllap{1 \leq} j \mathrlap{\leq i}} \neg\hat{K}_b \chi'_{j} \wedge \bigwedge\limits_{\mathllap{i <} j \mathrlap{\leq n}} \hat{K}_b \chi'_{j}) \rightarrow \xi_{i+1}) &
      for even~$i \leq n$. \\
  \end{dcases*}
\end{array} \]
Here,~$\psi'$ is the formula obtained from~$\psi$
(the quantifier-free part of the
quantified Boolean formula~$\varphi$) by replacing each occurrence
of a propositional variable~$x_i$ by the formula~$\hat{K}_{a} \chi_i$.

We use the formulas~$\xi_{i}$ to express
the formula~$\varphi$ with its existentially and universally quantified
variables.
Intuitively, the formulas~$\xi_{i}$ navigate through the groups of
worlds in the
model~$\MMM'$---resulting from updating~$\MMM$ with the event
models~$(\EEE_i,e_i)$---%
as follows.
Consider the central world of some group of worlds
in the model~$\MMM'$,
and consider the formula~$\xi_i$ for some odd~$i \leq n$.
For~$\xi_i$ to be true in this world,
the formula~$\xi_{i+1}$ needs to be true in the central world
of \emph{some} group of worlds
that corresponds to one of the events~$f^4_i$ or~$f^5_i$
from the event model~$(\EEE_i,e_i)$.
Similarly, for the formula~$\xi_i$ to be true in this world,
for even~$i \leq n$,
the formula~$\xi_{i+1}$ needs to be true in the central world
of \emph{both} groups of worlds
that correspond to the events~$f^4_i$ and~$f^5_i$.
In this way, for odd~$i \leq n$, the formula~$\xi_i$ together with the
event model~$(\EEE_i,e_i)$ serves to simulate an existential choice
of a truth value for the variable~$x_i$.
Similarly, for even~$i \leq n$, the formula~$\xi_i$ together with the
event model~$(\EEE_i,e_i)$ serves to simulate a universal choice
of a truth value for the variable~$x_i$.

The model~$(\MMM,w_0)$ and the formula~$\xi$ can be constructed in
polynomial time in the size of the quantified Boolean formula~$\varphi$.
Furthermore, in the constructed instance,
there are only two agents,
the epistemic model~$(\MMM,w_0)$ is a single-pointed S5 model,
all event models~$(\EEE_i,e_i)$ are single-pointed S5
models without postconditions,
and~$\xi_1$ is a formula without update modalities that contains
(many occurrences of) only three propositional variables~$z_0,z_1,z_2$.

We show that~$\varphi$ is a true
quantified Boolean formula if and only if~$\MMM,w_0 \models \xi$.
In order to do so, we prove the following (technical) statement relating truth
assignments~$\alpha$ to the propositions~$x_1,\dotsc,x_n$
and (particular) worlds~$w$ in the epistemic model~$(\MMM',w'_0) =
(\MMM,w_0) \otimes (\EEE_1,e_1) \otimes \dotsm \otimes (\EEE_n,e_n)$.
Before we give the statement that we will prove, we observe that
every world~$w$ that sets both~$z_1$ and~$z_2$
to true is the central world of some
group of worlds that represents a truth assignment~$\alpha$
to the propositions~$x_1,\dotsc,x_n$.
For the sake of convenience, we will say
that~$w$ corresponds to the truth assignment~$\alpha$.
The statement that we will prove for all~$1 \leq i \leq n+1$ is the following.

\medskip
\noindent \textit{Statement:}
Let~$\alpha$ be any truth assignment to the propositions~$x_1,\dotsc,x_{i-1}$.
Moreover, let~$w$ be any world in the model~$(\MMM',w'_0)$ such that:
\begin{enumerate}
  \item $w$ makes~$z_1$ and~$z_2$ true,
  \item $w$ makes~$\hat{K}_{b} \chi'_{j}$ false for all~$1 \leq j < i$,
  \item $w$ makes~$\hat{K}_{b} \chi'_{j}$ true
    for all~$i \leq j \leq n$, and
  \item the truth assignment corresponding to~$w$ agrees
    with~$\alpha$ on the propositions~$x_1,\dotsc,x_{i-1}$.
\end{enumerate}
Then the (partially) quantified Boolean formula~$Q_i x_i \dotsc \exists x_{n-1}
\forall x_n. \psi$ is true under~$\alpha$ if and only
if~$w$ makes~$\xi_i$ true.
\medskip

Observe that for~$i = 1$, the world~$w'_0$ satisfies all four conditions.
Therefore, the statement for~$i = 1$ implies
that~$\MMM,w_0 \models \xi$
if and only if~$\varphi$ is a true quantified Boolean formula.
Thus, proving this statement for all~$1 \leq i \leq n+1$ suffices
to show the correctness of our reduction.

We begin by showing that the statement holds for~$i = n + 1$.
In this case, we know that~$\alpha$ is a truth assignment to the
propositions~$x_1,\dotsc,x_n$.
Moreover,~$\xi_{n+1} = \psi'$.
By construction of~$\psi'$, we know
that~$w$ makes~$\psi'$ true if and only if~$\alpha$ satisfies~$\psi$.
Therefore, the statement holds for~$i = n + 1$.

Next, we let~$1 \leq i \leq n$ be arbitrary,
and we assume that the statement holds for~$i+1$.
That is, the statement holds for every combination of a truth
assignment~$\alpha$ and a world~$w$ that satisfies the conditions.
Since~$w$ is a world in the model~$(\MMM,w_0) \otimes
(\EEE_1,e_1) \otimes \dotsm \otimes (\EEE_n,e_n)$,
and since~$w$ makes~$z_1$ and~$z_2$ true,
we know that~$w = (w_0,e'_1,\dotsc,e'_{n})$,
for some~$e'_1,\dotsc,e'_n$,
where~$e'_j \in \EEE_j$ for all~$1 \leq j \leq n$.
We know that~$w$ makes~$\hat{K}_b \chi'_{j}$ true for all~$i \leq j \leq n$.
Therefore, we know that for each~$i \leq j \leq n$
it holds that~$e'_j \in \SBs f^1_j, f^2_j, f^3_j \SEs$.

We now distinguish two cases: either (1)~$i$ is odd,
or (2)~$i$ is even.
In case~(1), the $i$-th
quantifier of~$\varphi$ is existential,
and in case~(2), the $i$-th quantifier of~$\varphi$
is universal.
First, consider case~(1).
Suppose that~$\exists x_i \dotsc \exists x_{n-1} \forall x_n. \psi$ is true
under~$\alpha$. Then there exists some truth assignment~$\alpha'$
to the propositions~$x_1,\dotsc,x_i$ that agrees with~$\alpha$
on the propositions~$x_1,\dotsc,x_{i-1}$
and that ensures that~$\forall x_{i+1} \dotsc
\exists x_{n-1} \forall x_n. \psi$ is true under~$\alpha'$.
Suppose that~$\alpha'(x_i) = 0$; the case for~$\alpha'(x_i) = 1$
is entirely similar.
Now, consider the
worlds~$w' = (w_0,e'_1,\dotsc,e'_{i-1},f^3_i,e'_{i+1},\dotsc,e'_{n})$
and~$w'' = (w_0,e'_1,\dotsc,e'_{i-1},f^5_i,e'_{i+1},\dotsc,e'_{n})$.
By the construction of~$(\EEE_i,e_i)$, by the semantics
of product update, and by the fact
that~$e'_i \in \SBs f^1_i,f^2_i,f^3_i \SEs$,
it holds that~$(w,w') \in R_b$ and~$(w',w'') \in R_a$.
Moreover, it is straightforward to verify that~$w''$ satisfies
conditions~(1)--(4), for the truth assignment~$\alpha'$.
Also, we know that~$\forall x_{i+1} \dotsc
\exists x_{n-1} \forall x_n. \psi$ is true under~$\alpha'$.
Therefore, by the induction hypothesis,
we know that~$w''$ makes the formula~$\xi_{i+1}$ true.
It then follows from the definition of~$\xi_i$
that~$w'$ and~$w''$ witness that~$w$ makes~$\xi_i$ true.

Conversely, suppose that~$w$ makes~$\xi_i$ true.
Moreover, suppose that~$w'$ and~$w''$ (as defined above) witness this.
(The only other possible worlds~$u'$ and~$u''$ that could
witness this are obtained from~$w$ by replacing~$e'_i$ by~$f^2_i$
and~$f^4_i$, respectively. The case where~$u'$ and~$u''$ witness
that~$w$ makes~$\xi_i$ true is entirely similar.)
This means that~$w''$ makes~$\xi_{i+1}$ true.
Then, by the induction hypothesis, it follows that the truth
assignment~$\alpha'$ to the propositions~$x_1,\dotsc,x_i$ corresponding
to the world~$w''$
satisfies~$\forall x_{i+1} \dotsc \exists x_{n-1} \forall x_n. \psi$.
Moreover, since~$\alpha'$ agrees with~$\alpha$ on the
propositions~$x_1,\dotsc,x_{i-1}$, it follows
that~$\exists x_{i} \dotsc \exists x_{n-1} \forall x_n. \psi$
is true under~$\alpha$.

\sloppypar
Next, consider case~(2).
Suppose that~$\forall x_i \dotsc \exists x_{n-1} \forall x_n. \psi$
is true under~$\alpha$.
Then for both truth assignments~$\alpha'$
to the variables~$x_1,\dotsc,x_i$ that agree with~$\alpha$
it holds that~$\exists x_{i+1} \dotsc
\exists x_{n-1} \forall x_n. \psi$ is true under~$\alpha'$.
The only worlds that satisfy~$(z_1 \wedge z_2 \wedge
\bigwedge\nolimits_{1 \leq j \leq i} \neg\hat{K}_b \chi'_{j} \wedge
\bigwedge\nolimits_{i < j \leq n} \hat{K}_b \chi'_{j})$
and that are accessible from~$w$
by a $b$-relation followed by an $a$-relation
are the worlds~$u_1$ and~$u_2$,
where~$u_1 = (w_0,e'_1,\dotsc,e'_{i-1},f^4_i,e'_{i+1},\dotsc,e'_{n})$
and~$u_2 = (w_0,e'_1,\dotsc,e'_{i-1},f^5_i,e'_{i+1},\dotsc,e'_{n})$.
Moreover, the truth assignments~$\alpha_1$
and~$\alpha_2$ that correspond to~$u_1$ and~$u_2$,
respectively,
agree with~$\alpha$ on the propositions~$x_1,\dotsc,x_{i-1}$.
Because~$\forall x_i \dotsc \exists x_{n-1} \forall x_n. \psi$
is true under~$\alpha$, we know
that~$\exists x_{i+1} \dotsc \exists x_{n-1} \forall x_n. \psi$
is true under both~$\alpha_1$ and~$\alpha_2$.
Then, by the induction hypothesis
it follows that both~$u_1$ and~$u_2$ make~$\xi_{i+1}$ true.
Therefore,~$w$ makes~$\xi_i$ true.

Conversely, suppose that~$w$ makes~$\xi_i$ true.
By the definition of~$\xi_i$, we then know that all worlds that
are accessible from~$w$ by a $b$-relation followed by
an $a$-relation and that make~$(z_1 \wedge z_2 \wedge
\bigwedge\nolimits_{1 \leq j \leq i} \neg\hat{K}_b \chi'_{j} \wedge
\bigwedge\nolimits_{i < j \leq n} \hat{K}_b \chi'_{j})$ true,
also make~$\xi_{i+1}$ true.
Consider the worlds~$u_1$ and~$u_2$, as defined above.
These are both accessible from~$w$ by a $b$-relation followed
by an $a$-relation, and they make~$(z_1 \wedge z_2 \wedge
\bigwedge\nolimits_{1 \leq j \leq i} \neg\hat{K}_b \chi'_{j} \wedge
\bigwedge\nolimits_{i < j \leq n} \hat{K}_b \chi'_{j})$ true.
Therefore, both~$u_1$ and~$u_2$ make~$\xi_{i+1}$ true.
Also, the truth assignments~$\alpha_1$ and~$\alpha_2$
that correspond to~$u_1$ and~$u_2$, respectively,
agree with~$\alpha$ on the propositions~$x_1,\dotsc,x_{i-1}$.
Moreover, the truth assignments~$\alpha_1$ and~$\alpha_2$
are both possible truth assignments to the
propositions~$x_1,\dotsc,x_i$ that agree with~$\alpha$.
By the induction hypothesis, the
formula~$\exists x_{i+1} \dotsc \exists x_{n-1} \forall x_n. \psi$ is true
under both~$\alpha_1$ and~$\alpha_2$.
Therefore, we can conclude
that~$\forall x_i \dotsc \exists x_{n-1} \forall x_n. \psi$
is true under~$\alpha$.

This concludes the inductive proof of the statement for
all~$1 \leq i \leq n+1$,
and thus concludes our correctness proof.
Therefore, we can conclude that the problem is \PSPACE{}-hard.
\qed\end{proof}

\section{Results for semi-private announcements}
\label{sec:semi-private-announcements}

Next, we consider the model checking problem for DEL when
restricted to updates that are semi-private announcements.
In fact, \PSPACE{}-hardness for the setting with semi-private
announcements (rather than allowing arbitrary event models)
already follows from a recent \PSPACE{}-hardness proof for a restricted
variant of the model checking problem---{ see~\cite[Theorem~4]{VandePol15}
and~\cite[Theorem~1]{VandePolVanRooijSzymanik18}.}
In that \PSPACE{}-hardness result, the number of agents is unbounded,
i.e., the number of agents is part of the problem input.
We show that the problem is already \PSPACE{}-hard
when the number of agents is bounded
by any constant~$k \geq 2$.

{
The result of Theorem~\ref{thm:semi-private-pspace-hardness}
shows that the inherent hardness of the model checking problem
for DEL---which we saw in Theorem~\ref{thm:pspace-hardness}
is already present in a very restricted setting---is even present when
we restrict event models to be of a very specific shape (i.e.,
semi-private announcements).
}

Theorem~\ref{thm:semi-private-pspace-hardness} is a stronger
result than Theorem~\ref{thm:pspace-hardness}---%
Theorem~\ref{thm:semi-private-pspace-hardness}
implies the result of Theorem~\ref{thm:pspace-hardness}.
We presented the proof of Theorem~\ref{thm:pspace-hardness} in full
detail, because it allows us to explain the proof of
Theorem~\ref{thm:semi-private-pspace-hardness} in a clear way.
{
In fact, the result of Theorem~\ref{thm:semi-private-pspace-hardness}
is stronger than all other \PSPACE{}-hardness results
for the model checking problem for DEL in the literature---%
i.e.,~%
\cite[Proposition~7.2]{VanEijckSchwarzentruber14},
~\cite[Theorem~2]{AucherSchwarzentruber13} and~%
\cite{VandePol15,VandePolVanRooijSzymanik18}.
These results all depend on allowing certain parts of the DEL setting
being unrestricted---e.g., allowing multi-pointed models,
allowing more than two agents, or allowing non-serial relations---%
and their proofs cannot easily be modified to work for the more
restricted setting of Theorem~\ref{thm:semi-private-pspace-hardness}
with (single-pointed) semi-private announcements, S5 relations,
only two agents, and only three propositional variables.
}

{
The main technical hurdle that needs to be overcome to establish
Theorem~\ref{thm:semi-private-pspace-hardness} is to simulate
the event models that we used in the proof
of Theorem~\ref{thm:pspace-hardness} (each consisting of~5 events)
by using a sequence of semi-private announcements
(which are event models with~2 events).
{
We do this by constructing---for each event model---three semi-private
announcements, in such a way
that when these three semi-private announcements are
composed, they take the role of the event model in the proof.
In order to make sure that the semi-private announcements
correctly take over the role of the event models in the proof,
we also need to adapt the part of the DEL formula that expresses
(the unquantified part of) the quantified Boolean formula.
}
}

\begin{theorem}
\label{thm:semi-private-pspace-hardness}
The model checking problem for DEL is \PSPACE{}-hard,
even when restricted to the case where the question is
whether~$\MMM,w_0 \models [\EEE_1,e_1]\dotsc[\EEE_n,e_n] \chi$,
where:
\begin{itemize}
  \item the model~$(\MMM,w_0)$ is a single-pointed S5 model;
  \item all the~$(\EEE_i,e_i)$ are (single-pointed) semi-private
    announcements;
  \item $\chi$ is an epistemic formula without update modalities
    that contains (multiple occurrences of)
    only three propositional variables; and
  \item there are only two agents.
\end{itemize}
\end{theorem}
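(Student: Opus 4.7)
The plan is to adapt the proof of Theorem~\ref{thm:pspace-hardness} by replacing each 5-event update~$(\EEE_i, e_i)$ used there by a short, constant-length sequence of semi-private announcements. I would keep the initial single-pointed epistemic model~$(\MMM, w_0)$, the chain-and-group encoding of truth assignments, the three propositional variables~$z_0,z_1,z_2$, the two agents~$a,b$, and the overall DEL-formula template with its alternating $\hat{K}_b\hat{K}_a$/$K_bK_a$ navigation and the same chain-length testing subformulas~$\chi_j$,~$\chi'_j$,~$\chi^a_j$,~$\chi^b_j$ essentially unchanged.

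For each variable~$x_i$ I would introduce two consecutive semi-private announcements, both with~$A=\{b\}$ (so that only~$a$ connects the two events of each SPA, while~$b$ distinguishes them). The first SPA for~$x_i$ has preconditions~$\top$ (designated) and~$\neg\chi'_i$; its non-designated copy removes the first world of every length-$i$ $z_2$-chain and thereby plays the role of event~$f^4_i$ in advancing the counter. The second SPA for~$x_i$ has preconditions~$\top$ (designated) and~$\neg\chi_i$; its non-designated copy removes the first world of every length-$i$ $z_1$-chain and plays the role of~$f^5_i$ in setting~$x_i$ to~$0$. Because~$b\in A$ in every SPA, the~$b$ relation never crosses between the two copies of any SPA, so the "$b$-local" subformulas~$\hat{K}_b\chi'_j$ and~$\neg\hat{K}_b\chi'_j$ that drive the counter-tracking conditions in each~$\xi_i$ behave exactly as they do in the original proof: the $b$-reach from any central world stays within a single event-tuple copy.

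With~$\xi = [\EEE_{1,1},e_{1,1}][\EEE_{1,2},e_{1,2}] \dotsm [\EEE_{n,1},e_{n,1}][\EEE_{n,2},e_{n,2}]\,\xi_1$ and the~$\xi_i$'s taken as in Theorem~\ref{thm:pspace-hardness}, the inductive correctness argument carries over almost verbatim. The inductive invariant (the analogue of the Statement in the proof of Theorem~\ref{thm:pspace-hardness}) associates each central world of the post-update model, indexed by a tuple of SPA event choices, with a partial truth assignment to~$x_1,\dotsc,x_n$; at level~$i$ the~$\hat{K}_b\hat{K}_a$ (or~$K_bK_a$) modality together with the conditions~$\bigwedge_{j\le i}\neg\hat{K}_b\chi'_j \wedge \bigwedge_{j>i}\hat{K}_b\chi'_j$ navigates from such a central to the two processed centrals corresponding to the two truth values of~$x_i$, exactly simulating the existential or universal choice. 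Since the reduction uses only~$2n$ SPAs, each of constant size, it is polynomial; and the resulting instance uses only two agents, only three propositional variables, only single-pointed S5 models, and only semi-private announcements.

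The main technical obstacle is to verify that the chain-characterizing subformulas~$\chi^a_j$,~$\chi^b_j$,~$\chi'_j$, and~$\chi_j$ continue to identify first worlds of chains of the intended length in the SPA-updated model, despite the additional cross-copy $a$-edges introduced by the $A=\{b\}$ SPAs. The key observation is that in the product model any~$a$- or~$b$-step either leaves the $\MMM$-component unchanged (a "trivial" cross-copy move, which stays at the same world of~$\MMM$) or moves it along a natural~$a$- or~$b$-edge of~$\MMM$; consequently, extra cross-copy moves can only lengthen---never shorten---an alternating $b$-$a$-$b$-$\dotsc$ chain from a first-chain world to the corresponding~$z_0$-world. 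This makes~$\chi^b_j$ true at a first $z_1$-chain world exactly when the natural chain has length at most~$j$, and so~$\chi_j \equiv z_1 \wedge \neg z_2 \wedge \chi^b_j \wedge \neg\chi^b_{j-1}$ still singles out first worlds of $z_1$-chains of length exactly~$j$, which is what the inductive step requires.
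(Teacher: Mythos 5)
Your high-level plan (replace each five-event update by a constant-length sequence of semi-private announcements and reuse the navigation formulas) is the right one, but the specific construction has a genuine gap: it loses the \emph{commitment} structure of the original proof. In the five-event model of Theorem~\ref{thm:pspace-hardness}, the two ``processed'' events~$f^4_i$ (keep~$x_i$ true) and~$f^5_i$ (set~$x_i$ false) are not adjacent to each other by any single edge; the only paths between them pass through the unprocessed events~$f^2_i,f^3_i$, which the counter conditions~$\neg\hat{K}_b\chi'_j$ exclude once level~$i$ has been processed. Hence the value chosen for~$x_i$ cannot be altered at any later level. In your construction both announcements for~$x_i$ put an $a$-edge between their two events, so in the product model the central worlds of \emph{all}~$2^{2n}$ copies form a single $a$-clique, and a single $a$-step at level~$i$ can flip the $h$-coordinate of \emph{any} index~$j$ --- including~$j<i$, which encodes the already-quantified value of~$x_j$. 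Your filter conditions only constrain the $z_2$-chains (the $g$-coordinates), and they cannot be strengthened to pin down the $z_1$-chains, because the intended assignment to~$x_1,\dotsc,x_{i-1}$ varies along the quantifier tree while~$\xi_i$ is a fixed formula. Concretely, for~$\varphi = \exists x_1 \forall x_2.\, x_1$ (which is true), the universal step at level~$2$ ranges over copies with both values of~$h'_1$, so~$\psi' = \hat{K}_a\chi_1$ fails at one of them and your~$\xi$ comes out false. A further symptom of the same problem is that with no $b$-edges crossing copies, the leading~$\hat{K}_b$ (resp.~$K_b$) in your~$\xi_i$ is vacuous with respect to copies, so the intended two-step $b$-then-$a$ navigation degenerates into a single unconstrained $a$-step.

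The paper repairs exactly this defect, and the repair is the real content of the proof: it uses \emph{three} semi-private announcements per variable, the first of which has a \emph{$b$-edge} (so the choice of~$x_i$'s value is made by a $b$-step, as in Theorem~\ref{thm:pspace-hardness}), and it enlarges the initial model with marker $z_2$-chains of lengths~$n+1$ up to~$3n$. The first announcement removes the marker of length~$i+n$ in one branch and the marker of length~$i+2n$ in the other, and the preconditions of the two subsequent $a$-edge announcements are \emph{conditioned on these markers} (e.g., $(\neg\hat{K}_b\chi'_{i+n}\wedge z_2)\rightarrow\neg\chi'_i$), so that the removal of the length-$i$ chains happens only in the branch selected by the $b$-step. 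This is what makes the choice irreversible and restores the invariant that earlier choices are preserved. The paper also has to add guard conjuncts~$\bigwedge_{1\leq j\leq i}\neg\hat{K}_b\chi'_j$ after the first modal operator of each~$\xi_i$ to screen out by-product worlds. Your closing observation about the chain-length formulas surviving cross-copy $a$-edges addresses a real but secondary concern; the missing idea is the marker mechanism.
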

\begin{proof}
We modify the proof of Theorem~\ref{thm:pspace-hardness} to
work also for the case of semi-private announcements.
Most prominently, we will replace the event models~$(\EEE_i,e_i)$
that are used in the proof of Theorem~\ref{thm:pspace-hardness}
(shown in Figure~\ref{fig:update-xi}) by a number of event models
for semi-private announcements.
Intuitively, these semi-private announcements will take the role
of the event models~$(\EEE_i,e_i)$.
In order to make this work, we will also slightly change the
initial model~$\MMM$.

As in the proof of Theorem~\ref{thm:pspace-hardness},
we give a polynomial-time reduction
from the problem of deciding whether a quantified Boolean formula is true.
Let~$\varphi = \exists x_1 \forall x_2 \dotsc \exists x_{n-1} \forall x_{n}. \psi$
be a quantified Boolean formula, where~$\psi$ is quantifier-free.
We construct an epistemic model~$(\MMM,w_0)$ with two agents~$a,b$
and a DEL-formula~$\xi$ such that~$\MMM,w_0 \models \xi$ if and only
if~$\varphi$ is true.

In the proof of Theorem~\ref{thm:pspace-hardness},
the initial model~$\MMM$ consisted of a central world
(where~$z_1$ and~$z_2$ are true),
a number of $z_1$-chains (for each~$1 \leq i \leq n$,
there is a $z_1$-chain of length~$i$),
and a number of $z_2$-chains (for each~$1 \leq i \leq n$,
there is a $z_2$-chain of length~$i$)---%
and these worlds are connected by $a$-relations and
$b$-relations as shown in
Figure~\ref{fig:initial-model-for-three-vars}.
To obtain the initial model~$\MMM$ that we use in this proof,
we add additional $z_2$-chains.
Specifically, for each~$1 \leq i \leq 3n$,
we will have a $z_2$-chain of length~$i$.
These additional $z_2$-chains are connected to the central
world in exactly the same way as the original $z_2$-chains
(that is, all the first worlds of the $z_2$-chains are connected
in a $b$-clique to the central world).
We will use these additional $z_2$-chains to simulate the behavior
of the event models~$(\EEE_i,e_i)$ from the proof of
Theorem~\ref{thm:pspace-hardness} with
event models corresponding to semi-private announcements.
The number of $z_1$-chains remains the same.
The designated world~$w_0$ is the central world (that is, the
only world that makes both~$z_1$ and~$z_2$ true),
as in the proof of Theorem~\ref{thm:pspace-hardness}.

\begin{figure}[htp!]

\begin{subfigure}[b]{\textwidth}
\centering
\begin{tikzpicture}
  \tikzstyle{dnode}=[inner sep=1pt,outer sep=1pt,draw,circle,minimum width=9pt]
  \tikzstyle{nnode}=[inner sep=1pt,outer sep=1pt,circle,minimum width=9pt]
  \tikzstyle{dnode'}=[inner sep=1pt,outer sep=1pt,draw,rectangle,minimum width=9pt,minimum height=9pt]
  \tikzstyle{nnode'}=[inner sep=1pt,outer sep=1pt,rectangle,minimum width=9pt,minimum height=9pt]
  \tikzstyle{label-edge}=[midway,fill=white, inner sep=1pt]
  \node[] at (-3,0) {};
  \node[] at (6,0) {};
  \node[dnode', label=below:{\textcolor{black!60}{\scriptsize $(f^1_i)$}}, label=above:{\scriptsize $\langle \neg\chi'_{i+n}, \top \rangle$}] (w0) at (0,0) {};
  \fill {(w0)+(-0.07,-0.07)} rectangle (0.07,0.07);
  \node[nnode', label=below:{\textcolor{black!60}{\scriptsize $(f^2_i)$}}, label=above:{\scriptsize $\langle \neg\chi'_{i+2n}, \top \rangle$}] (w1) at (3,0) {};
  \fill {(w1)+(-0.07,-0.07)} rectangle (3.07,0.07);
  \draw[-] (w0) -- (w1) node[label-edge] {\scriptsize $b$};
\end{tikzpicture}
\caption{The semi-private announcement~$(\EEE^1_i,f^1_i)$}
\label{fig:semi-private-hardness1}
\end{subfigure}

\bigskip

\begin{subfigure}[b]{\textwidth}
\centering
\begin{tikzpicture}
  \tikzstyle{dnode}=[inner sep=1pt,outer sep=1pt,draw,circle,minimum width=9pt]
  \tikzstyle{nnode}=[inner sep=1pt,outer sep=1pt,circle,minimum width=9pt]
  \tikzstyle{dnode'}=[inner sep=1pt,outer sep=1pt,draw,rectangle,minimum width=9pt,minimum height=9pt]
  \tikzstyle{nnode'}=[inner sep=1pt,outer sep=1pt,rectangle,minimum width=9pt,minimum height=9pt]
  \tikzstyle{label-edge}=[midway,fill=white, inner sep=1pt]
  \node[] at (-3,0) {};
  \node[] at (6,0) {};
  \node[dnode', label=below:{\textcolor{black!60}{\scriptsize $(f^3_i)$}}, label=above:{\scriptsize $\langle \top, \top \rangle$}] (w0) at (0,0) {};
  \fill {(w0)+(-0.07,-0.07)} rectangle (0.07,0.07);
  \node[nnode', label=below:{\textcolor{black!60}{\scriptsize $(f^4_i)$}}, label=above:{\scriptsize $\langle (\neg \hat{K}_b \chi'_{i+n} \wedge z_2) \rightarrow \neg\chi'_i, \top \rangle$}] (w1) at (3,0) {};
  \fill {(w1)+(-0.07,-0.07)} rectangle (3.07,0.07);
  \draw[-] (w0) -- (w1) node[label-edge] {\scriptsize $a$};
\end{tikzpicture}
\caption{The semi-private announcement~$(\EEE^2_i,f^3_i)$}
\label{fig:semi-private-hardness2}
\end{subfigure}

\bigskip

\begin{subfigure}[b]{\textwidth}
\centering
\begin{tikzpicture}
  \tikzstyle{dnode}=[inner sep=1pt,outer sep=1pt,draw,circle,minimum width=9pt]
  \tikzstyle{nnode}=[inner sep=1pt,outer sep=1pt,circle,minimum width=9pt]
  \tikzstyle{dnode'}=[inner sep=1pt,outer sep=1pt,draw,rectangle,minimum width=9pt,minimum height=9pt]
  \tikzstyle{nnode'}=[inner sep=1pt,outer sep=1pt,rectangle,minimum width=9pt,minimum height=9pt]
  \tikzstyle{label-edge}=[midway,fill=white, inner sep=1pt]
  \node[] at (-3,0) {};
  \node[] at (6,0) {};
  \node[dnode', label=below:{\textcolor{black!60}{\scriptsize $(f^5_i)$}}, label=above:{\scriptsize $\langle \top, \top \rangle$}] (w0) at (0,0) {};
  \fill {(w0)+(-0.07,-0.07)} rectangle (0.07,0.07);
  \node[nnode', label=below:{\textcolor{black!60}{\scriptsize $(f^6_i)$}}, label=above:{\parbox{4.2cm}{\scriptsize $\langle ((\neg\hat{K}_b \chi'_{i+2n} \wedge z_2) \rightarrow \neg \chi'_i)\ \wedge$\\$\phantom{\langle}((\neg\hat{K}_a \hat{K}_b \chi'_{i+2n} \wedge z_1) \rightarrow \neg \chi_i), \top \rangle$}}] (w1) at (3,0) {};
  \fill {(w1)+(-0.07,-0.07)} rectangle (3.07,0.07);
  \draw[-] (w0) -- (w1) node[label-edge] {\scriptsize $a$};
\end{tikzpicture}
\caption{The semi-private announcement~$(\EEE^3_i,f^5_i)$}
\label{fig:semi-private-hardness3}
\end{subfigure}

\caption{The semi-private announcements~$(\EEE^1_i,f^1_i)$,~$(\EEE^2_i,f^3_i)$ and~$(\EEE^3_i,f^5_i)$ used in the proof of Theorem~\ref{thm:semi-private-pspace-hardness}.}
\label{fig:semi-private-hardness}
\end{figure}

The event models~$(\EEE_i,e_i)$ that are used in the proof
of Theorem~\ref{thm:pspace-hardness} we replace
by the semi-private announcements~$(\EEE^1_i,f^1_i)$,%
~$(\EEE^2_i,f^3_i)$, and~$(\EEE^3_i,f^5_i)$,
as shown in Figure~\ref{fig:semi-private-hardness}.
The intuition behind these updates is the following.
Firstly, the semi-private announcement~$\EEE^1_i$,
shown in Figure~\ref{fig:semi-private-hardness1},
transforms every group of worlds into two copies,
and allows a choice between these two copies when
following a $b$-relation.
Moreover, in one copy, every (first world of the) $z_2$-chain
of length~$i+n$ is removed, and in the other copy,
every (first world of the) $z_2$-chain
of length~$i+2n$ is removed.
In other words, the choice between these two copies
determines whether the formula~$\hat{K}_b \chi'_{i+n}$
or the formula~$\hat{K}_b \chi'_{i+2n}$ is false
in the central world.
Then for every group of
worlds that does not include a $z_2$-chain of length~$i+n$, the semi-private announcement~$\EEE^2_i$,
shown in Figure~\ref{fig:semi-private-hardness2},
creates an $a$-accessible copy
where the $z_2$-chain of length~$i$ is removed.
Similarly, for every group of
worlds that does not include a $z_2$-chain of length~$i+2n$, the semi-private announcement~$\EEE^3_i$,
shown in Figure~\ref{fig:semi-private-hardness3},
creates an $a$-accessible copy
where both the $z_2$-chain of length~$i$
and the $z_1$-chain of length~$i$ are removed.

Next, we construct the DEL-formula~$\xi$.
We let~$\xi = [\EEE^1_1,f^1_1] [\EEE^2_1,f^3_1] [\EEE^3_1,f^5_1]
\dotsc \allowbreak{} [\EEE^1_n,f^1_n] [\EEE^2_n,f^3_n] [\EEE^3_n,f^5_n] \xi_{1}$,
where~$\xi_{1}$ is defined as follows,
similarly to the definition used in the proof of
Theorem~\ref{thm:pspace-hardness}.
For each~$1 \leq i \leq n+1$, we define~$\xi_{i}$ inductively
as follows:
\[ \begin{small} \begin{array}{r l}
  \xi_{i} = & \begin{dcases*}
    \psi' & if~$i = n+1$, \\
    \hat{K}_b (\bigwedge\limits_{\mathllap{1 \leq} j \mathrlap{\leq i}} \neg\hat{K}_b \chi'_{j} \wedge \hat{K}_a (z_1 \wedge z_2 \wedge \bigwedge\limits_{\mathllap{1 \leq} j \mathrlap{\leq i}} \neg\hat{K}_b \chi'_{j} \wedge \bigwedge\limits_{\mathllap{i <} j \mathrlap{\leq n}} \hat{K}_b \chi'_{j} \wedge \xi_{i+1})) &
      for odd~$i \leq n$, \\
    K_b ((\bigwedge\limits_{\mathllap{1 \leq} j \mathrlap{\leq i}} \neg\hat{K}_b \chi'_{j}) \rightarrow K_a ((z_1 \wedge z_2 \wedge \bigwedge\limits_{\mathllap{1 \leq} j \mathrlap{\leq i}} \neg\hat{K}_b \chi'_{j} \wedge \bigwedge\limits_{\mathllap{i <} j \mathrlap{\leq n}} \hat{K}_b \chi'_{j}) \rightarrow \xi_{i+1})) &
      for even~$i \leq n$. \\
  \end{dcases*}
\end{array} \end{small} \]
Here,~$\psi'$ is the formula obtained from~$\psi$
(the quantifier-free part of the
quantified Boolean formula~$\varphi$) by replacing each occurrence
of a propositional variable~$x_i$ by the formula~$\hat{K}_{a} \chi_i$.

The formulas~$\xi_i$ that we defined above are very similar to their
counterparts in the proof of Theorem~\ref{thm:pspace-hardness}---%
and the idea behind their use in the proof is entirely the same
as in the proof of Theorem~\ref{thm:pspace-hardness}.
The only difference is the addition of the
subformulas~$\bigwedge\nolimits_{1 \leq j \leq i} \neg \hat{K}_b \chi'_j$
after the first modal operator.
These additional subformulas are needed to ensure that some
additional worlds---that are a by-product of the combination
of the semi-private announcements~$(\EEE^1_i,f^1_i)$,%
~$(\EEE^2_i,f^3_i)$, and~$(\EEE^3_i,f^5_i)$---%
do not interfere in the reduction.

We show that~$\varphi$ is a true
quantified Boolean formula if and only if~$\MMM,w_0 \models \xi$.
In order to do so, as in the proof of Theorem~\ref{thm:pspace-hardness},
we prove the following (technical) statement relating truth
assignments~$\alpha$ to the propositions~$x_1,\dotsc,x_n$
and (particular) worlds~$w$ in the epistemic model~$(\MMM',w'_0) =
(\MMM,w_0) \otimes (\EEE^1_1,f^1_1) \otimes \dotsm
\otimes (\EEE^3_n,f^5_n)$.
Before we give the statement that we will prove, we observe that
every world~$w$ that sets both~$z_1$ and~$z_2$
to true is the central world of some
group of worlds that represents a truth assignment~$\alpha$
to the propositions~$x_1,\dotsc,x_n$.
For the sake of convenience, we will say
that~$w$ corresponds to the truth assignment~$\alpha$.
The statement that we will prove for all~$1 \leq i \leq n+1$ is the following.

\medskip
\noindent \textit{Statement:}
Let~$\alpha$ be any truth assignment to the propositions~$x_1,\dotsc,x_{i-1}$.
Moreover, let~$w$ be any world in the model~$(\MMM',w'_0)$ such that:
\begin{enumerate}
  \item $w$ makes~$z_1$ and~$z_2$ true,
  \item $w$ makes~$\hat{K}_{b} \chi'_{j}$ false for all~$1 \leq j < i$,
  \item $w$ makes~$\hat{K}_{b} \chi'_{j}$ true
    for all~$i \leq j \leq n$, and
  \item the truth assignment corresponding to~$w$ agrees
    with~$\alpha$ on the propositions~$x_1,\dotsc,x_{i-1}$.
\end{enumerate}
Then the (partially) quantified Boolean formula~$Q_i x_i \dotsc \exists x_{n-1}
\forall x_n. \psi$ is true under~$\alpha$ if and only
if~$w$ makes~$\xi_i$ true.
\medskip

Observe that for~$i = 1$, the world~$w'_0$ satisfies all four conditions.
Therefore, the statement for~$i = 1$ implies
that~$\MMM,w_0 \models \xi$
if and only if~$\varphi$ is a true quantified Boolean formula.
Thus, proving this statement for all~$1 \leq i \leq n+1$ suffices
to show the correctness of our reduction.

We begin by showing that the statement holds for~$i = n + 1$.
In this case, we know that~$\alpha$ is a truth assignment to the
propositions~$x_1,\dotsc,x_n$.
Moreover,~$\xi_{n+1} = \psi'$.
By construction of~$\psi'$, we know
that~$w$ makes~$\psi'$ true if and only if~$\alpha$ satisfies~$\psi$.
Therefore, the statement holds for~$i = n + 1$.

Next, we let~$1 \leq i \leq n$ be arbitrary,
and we assume that the statement holds for~$i+1$.
That is, the statement holds for every combination of a truth
assignment~$\alpha$ and a world~$w$ that satisfies the conditions.
Since~$w$ is a world in the model~$(\MMM,w_0) \otimes
(\EEE^1_1,f^1_1) \otimes \dotsm \otimes (\EEE^3_n,f^5_n)$,
and since~$w$ makes~$z_1$ and~$z_2$ true,
we know that~$w = (w_0,g_1,g'_1,g''_1,\dotsc,g_n,g'_n,g''_n)$,
for some~$g_1,g'_1,g''_1,\dotsc,g_n,g'_n,g''_n$,
where for each~$1 \leq j \leq n$, it holds
that~$g_j \in \SBs f^1_j,f^2_j \SEs$,~$g'_j \in \SBs f^3_j,f^4_j \SEs$,
and~$g''_j \in \SBs f^5_j,f^6_j \SEs$.

We now distinguish two cases: either (1)~$i$ is odd,
or (2)~$i$ is even.
In case~(1), the $i$-th
quantifier of~$\varphi$ is existential,
and in case~(2), the $i$-th quantifier of~$\varphi$
is universal.
First, consider case~(1).
Suppose that~$\exists x_i \dotsc \exists x_{n-1} \forall x_n. \psi$ is true
under~$\alpha$. Then there exists some truth assignment~$\alpha'$
to the propositions~$x_1,\dotsc,x_i$ that agrees with~$\alpha$
on the propositions~$x_1,\dotsc,x_{i-1}$
and that ensures that~$\forall x_{i+1} \dotsc
\exists x_{n-1} \forall x_n. \psi$ is true under~$\alpha'$.
Suppose that~$\alpha'(x_i) = 0$; the case for~$\alpha'(x_i) = 1$
is entirely similar.
Now, consider the
worlds~$w' = (w_0,g_1,\dotsc,g''_{i-1},f^2_i,g'_i,g''_i,g_{i+1},\dotsc,g''_n)$
and~$w'' = (w_0,g_1,\dotsc,g''_{i-1},f^2_i,g'_i,f^6_i, \allowbreak g_{i+1},\dotsc,g''_n)$
By the construction of~$\EEE^1_i$,~$\EEE^2_i$, and~$\EEE^3_i$,
and by the semantics of product update,
it holds that~$(w,w') \in R_b$ and~$(w',w'') \in R_a$.
Also, we know that~$w'$
makes~$\bigwedge\nolimits_{1 \leq j \leq i} \neg\hat{K}_b \chi'_{j}$
true.
Moreover, it is straightforward to verify that~$w''$ satisfies
conditions~(1)--(4), for the truth assignment~$\alpha'$.
Also, we know that~$\forall x_{i+1} \dotsc
\exists x_{n-1} \forall x_n. \psi$ is true under~$\alpha'$.
Therefore, by the induction hypothesis,
we know that~$w''$ makes the formula~$\xi_{i+1}$ true.
It then follows from the definition of~$\xi_i$
that~$w'$ and~$w''$ witness that~$w$ makes~$\xi_i$ true.

Conversely, suppose that~$w$ makes~$\xi_i$ true.
Moreover, suppose that~$w'$ and~$w''$ (as defined above) witness this.
It could also be the case that the worlds~$u'$ and~$u''$
witness this, which are obtained from~$w$
by replacing~$g_i$ by~$f^1_i$,
and by replacing~$g_i$ by~$f^1_i$ and~$g'_i$ by~$f^4_i$,
respectively. The case where~$u'$ and~$u''$ witness
that~$w$ makes~$\xi_i$ true is entirely similar.
(There are also variants of~$w'$ and~$w''$,
and of~$u'$ and~$u''$, that could witness the fact that~$w$
makes~$\xi_i$ true. These variants can be obtained by
replacing~$g_j$,~$g'_j$, and~$g''_j$---for~$i < j \leq n$---%
ensuring that for all~$i < j \leq n$ it holds that
neither (1)~$g_j = f^1_j$ and~$g'_j = f^4_j$
nor (2)~$g_j = f^2_j$ and~$g''_j = f^6_j$.
The following argument is entirely similar for these variants.
Therefore, we restrict our attention to the worlds~$w'$ and~$w''$.)
The assumption that~$w'$ and~$w''$ witness that~$w$
makes~$\xi_i$ true implies that~$w'$
makes~$\bigwedge\nolimits_{1 \leq j \leq i} \neg\hat{K}_b \chi'_{j}$ true
and that~$w''$ makes~$\xi_{i+1}$ true.
Then, by the induction hypothesis, it follows that the truth
assignment~$\alpha'$ to the propositions~$x_1,\dotsc,x_i$
corresponding to the world~$w''$
satisfies~$\forall x_{i+1} \dotsc \exists x_{n-1} \forall x_n. \psi$.
Moreover, since~$\alpha'$ agrees with~$\alpha$ on the
propositions~$x_1,\dotsc,x_{i-1}$, it follows
that~$\exists x_{i} \dotsc \exists x_{n-1} \forall x_n. \psi$
is true under~$\alpha$.

Next, consider case~(2).
Suppose that~$\forall x_i \dotsc \exists x_{n-1} \forall x_n. \psi$
is true under~$\alpha$.
Then for both truth assignments~$\alpha'$
to the variables~$x_1,\dotsc,x_i$ that agree with~$\alpha$
it holds that~$\exists x_{i+1} \dotsc
\exists x_{n-1} \forall x_n. \psi$ is true under~$\alpha'$.
We need to look at those worlds that satisfy~$(z_1 \wedge z_2 \wedge
\bigwedge\nolimits_{1 \leq j \leq i} \neg\hat{K}_b \chi'_{j} \wedge
\bigwedge\nolimits_{i < j \leq n} \hat{K}_b \chi'_{j})$
and that are accessible from~$w$
by a $b$-relation followed by an $a$-relation
(where the intermediate world
makes~$\bigwedge\nolimits_{1 \leq j \leq i} \neg\hat{K}_b \chi'_{j}$ true).
For our argument, it suffices to look at the worlds~$u_1$ and~$u_2$,
where~$u_1 = (w_0,g_1,\dotsc,g''_{i-1},f^1_i,f^4_i,g''_i,g_{i+1},\dotsc,g''_n)$
and~$w'' = (w_0,g_1,\dotsc,g''_{i-1},f^2_i,g'_i,f^6_i,g_{i+1},\dotsc,g''_n)$.
(As in the argument for case~(1) above, there are variants of these
worlds that also satisfy the requirements.
The argument for these variants is entirely similar, and therefore we
restrict our attention to the worlds~$u_1$ and~$u_2$.)
The truth assignments~$\alpha_1$
and~$\alpha_2$ that correspond to~$u_1$ and~$u_2$,
respectively,
agree with~$\alpha$ on the propositions~$x_1,\dotsc,x_{i-1}$.
Because~$\forall x_i \dotsc \exists x_{n-1} \forall x_n. \psi$
is true under~$\alpha$, we know
that~$\exists x_{i+1} \dotsc \exists x_{n-1} \forall x_n. \psi$
is true under both~$\alpha_1$ and~$\alpha_2$.
Then, by the induction hypothesis
it follows that both~$u_1$ and~$u_2$ make~$\xi_{i+1}$ true.
Therefore,~$w$ makes~$\xi_i$ true.

Conversely, suppose that~$w$ makes~$\xi_i$ true.
By the definition of~$\xi_i$, we then know that all worlds that
are accessible from~$w$ by a $b$-relation followed by
an $a$-relation and that make~$(z_1 \wedge z_2 \wedge
\bigwedge\nolimits_{1 \leq j \leq i} \neg\hat{K}_b \chi'_{j} \wedge
\bigwedge\nolimits_{i < j \leq n} \hat{K}_b \chi'_{j})$ true
(where the intermediate world
makes~$\bigwedge\nolimits_{1 \leq j \leq i} \neg\hat{K}_b \chi'_{j}$ true),
also make~$\xi_{i+1}$ true.
Consider the worlds~$u_1$ and~$u_2$, as defined above.
These are both accessible from~$w$ by a $b$-relation followed
by an $a$-relation (where the intermediate world
makes~$\bigwedge\nolimits_{1 \leq j \leq i} \neg\hat{K}_b \chi'_{j}$ true),
and they make~$(z_1 \wedge z_2 \wedge
\bigwedge\nolimits_{1 \leq j \leq i} \neg\hat{K}_b \chi'_{j} \wedge
\bigwedge\nolimits_{i < j \leq n} \hat{K}_b \chi'_{j})$ true.
Therefore, both~$u_1$ and~$u_2$ make~$\xi_{i+1}$ true.
Also, the truth assignments~$\alpha_1$ and~$\alpha_2$
that correspond to~$u_1$ and~$u_2$, respectively,
agree with~$\alpha$ on the propositions~$x_1,\dotsc,x_{i-1}$.
Moreover, the truth assignments~$\alpha_1$ and~$\alpha_2$
are both possible truth assignments to the
propositions~$x_1,\dotsc,x_i$ that agree with~$\alpha$.
By the induction hypothesis, the
formula~$\exists x_{i+1} \dotsc \exists x_{n-1} \forall x_n. \psi$ is true
under both~$\alpha_1$ and~$\alpha_2$.
Therefore, we can conclude
that~$\forall x_i \dotsc \exists x_{n-1} \forall x_n. \psi$
is true under~$\alpha$.

This concludes the inductive proof of the statement for
all~$1 \leq i \leq n+1$,
and thus concludes our correctness proof.
Therefore, we can conclude that the problem is \PSPACE{}-hard.
\qed\end{proof}

{
\section{Discussion}
\label{sec:discussion}

In this section, we reflect on the relevance and significance of our
results in the overall endeavor of obtaining a well-informed and useful
understanding of the computational properties of the model checking
problem for DEL.
In particular, we discuss (a)~how our results contribute to the undertaking of
getting a detailed picture of the computational complexity of DEL model checking,
(b)~why such a detailed theoretical picture is useful and important for the
development and improvement of DEL model checking algorithms,
and (c)~how we can get an even more detailed picture of the complexity of DEL
model checking in future research.

\paragraph{Detailed worst-case complexity analysis}
In this paper---as in most works in the literature on the study
of the computational properties of DEL---%
we adopt the framework of worst-case computational complexity
analysis (see, e.g.,~\cite{AroraBarak09}).
This is a framework that has been hugely influential and successful,
but that also has its inherent downsides.
One of its main disadvantages is that it is prone to give an overly
negative image of the computational difficulty of a problem.
It is not uncommon for a problem to be computationally hard in the
worst case sense, while instances of this problem that come up in
applications can be solved efficiently.
Therefore, for a worst-case computational complexity analysis to provide
an accurate picture of the inherent complexity of a problem, it needs
to be as fine-grained and detailed as possible.
This means that it needs to consider many different restricted settings
that are relevant to applications that use the problem under study.

The results that we provided in Sections~\ref{sec:updates-with-event-models}
and~\ref{sec:semi-private-announcements} contribute to
the detail and fine-grainedness of the computational complexity study
of the model checking problem for DEL.
Previous work on the computational complexity of the problem
investigated restricted settings---where certain components of the
problem are restricted in number or shape.
However, this mostly involved studies where restrictions only involve
a single component of the problem
\cite{AucherSchwarzentruber13,VanBenthemVanEijckKooi06,%
VanEijckSchwarzentruber14,KooiVanBenthem04}.
There has been some research that considered combinations of restrictions
\cite{VandePol15,VandePolVanRooijSzymanik18}, but this work
only focuses on the line between polynomial-time solvable
(and an extension thereof: fixed-parameter tractability) and
computationally intractable.
Our results both (i)~take into account combinations of restrictions on different
components of the problem, and (ii)~are aimed at identifying the exact degree
of complexity of the problem---distinguishing between different degrees
of computational intractability.
As such, our results provide an important and useful step in the direction of
establishing a detailed and fine-grained picture of the worst-case complexity
of the model checking problem for DEL.

\paragraph{Guidance for model checking algorithms}
Establishing a detailed picture of the exact degree of computational complexity
for a wide range of restricted settings provides a good guide for the development
of practical model checking algorithms.
For example, the results of Theorems~\ref{thm:pspace-hardness}
and~\ref{thm:semi-private-pspace-hardness} show that the model checking
problem can require polynomial space even in restricted settings with a limited
number of agents and propositions.
This suggests that better performing algorithms could
be obtained by using optimized algorithmic approaches for PSPACE-complete
problems.
For example, it would be interesting to investigate whether encoding the
model checking problem for DEL into the satisfiability problem for
quantified Boolean formulas (QBFs) and subsequently invoking QBF solvers
on the resulting formula could lead to model checking algorithms that
are competitive with existing model checking algorithms---even on instances
that involve only a limited number of agents and propositions.
Such an approach would have the benefit that years of research and
engineering effort on developing QBF solvers
(see, e.g., \cite{GiunchigliaMarinNarizzano09}) could be leveraged
to get efficient algorithms.

The results that we established in this paper indicate that currently implemented
model checking algorithms for DEL---that are based on constructing a
representation of the epistemic model resulting from the original model
and updates applied to it---are likely to run into barriers of combinatorial
explosion already in very limited settings.
For example, the results of Theorems~\ref{thm:delta2},%
~\ref{thm:pspace-hardness2} and~\ref{thm:pspace-hardness} show that
deviating from the restricted setting of Proposition~\ref{prop:ptime}
in one of various minimal ways leads to a setting where any
(deterministic) algorithm cannot run in polynomial time in the worst case.
Examples of model checking algorithms for which these results are relevant
are those of DEMO \cite{VanEijck07}
and SMCDEL \cite{VanBenthemVanEijckGattingerSu18,Gattinger18}.

Our results also suggest directions for experimental evaluation
of (implemented) model checking algorithms.
For example, it would be useful to investigate on which types of instances
algorithms such as DEMO and SMCDEL perform well,
and on which types of instances they in fact run into barriers of combinatorial
explosion.
For instances where DEMO and SMCDEL perform poorly, it would be interesting
to study whether model checking algorithms
based on QBF solvers---and other optimized algorithmic methods for PSPACE-complete
problems---perform better.
The computational complexity results in this paper provide indications
for which properties of inputs could have an impact on the performance of different
model checking algorithms.

\paragraph{Parameterized complexity analysis}
The foundational computational complexity results for DEL model checking that
we developed in this paper (and that other papers in the literature developed)
enable an interesting direction for future research---%
namely, to employ the framework of \emph{parameterized
complexity theory} (see, e.g.,~\cite{CyganEtAl15,DowneyFellows13,%
FlumGrohe06}). This would take the undertaking
of providing a more fine-grained worst-case complexity analysis even a
step further.
Parameterized complexity theory
provides a complexity-theoretic framework that can be used to identify which
parts of the problem input contribute in what way to the running time
(or space usage) of algorithms solving the model checking problem for DEL.
This framework has already been used to initiate a more detailed investigation
of the computational complexity of the model checking problem for DEL
\cite{VandePol15,VandePolVanRooijSzymanik18}.
Further pursuing this research direction has the potential of yielding useful
and relevant insights into the computational properties of DEL.

The results in this paper provide a constructive foundation for establishing
parameterized complexity results for the model checking problem for DEL.
For example, the hardness result of Theorem~\ref{thm:pspace-hardness}
tells us that the model checking problem for DEL is
\para{\PSPACE}-complete when parameterized by the number of agents
and the number of propositional variables occurring in the formula---%
and thus is not fixed-parameter tractable for this parameter.
(For more details on the relation between traditional computational complexity
results and parameterized complexity, we refer to the literature---%
e.g.,~\cite{FlumGrohe03,FlumGrohe06}.)

\paragraph{Studying other restrictions}
Another way forward in the study of the computational properties of the model
checking problem for DEL that is pointed at by the results that we provide in this
paper, is to consider restrictions on the problem input that go beyond counting
simple quantities in the input (such as the number of agents or the number of
propositions) and instead are based on structural properties of the input.
An example of this would be to consider the computational properties
of settings where models are restricted to those whose underlying graph
has certain graph-theoretic properties---such as bounded treewidth.%
\footnote{{Treewidth is a measure that,
intuitively, captures how similar a
graph is to a tree (trees have minimum possible treewidth). Restricting problems
to graphs of bounded treewidth often yields efficient algorithms
for problems that are intractable in general
(see, e.g., \cite[Chapter~7]{CyganEtAl15}).}}
Whereas simple restrictions only lead to positive algorithmic results in only
a very limited number of cases---as indicated by the results that we provide in this
paper---structural restrictions have the potential of leading to positive results
in more general settings.
Tractability results based on such structural properties could then of course be used
to develop efficient model checking algorithms that are tailored to application
settings where these structural properties show up in problem inputs.
The framework of parameterized complexity theory is particularly well suited
to analyze the impact of structural properties of the problem input on the computational
complexity of the model checking problem for DEL.
}

\section{Conclusion}
\label{sec:conclusion}

{
We extended the investigation of the computational complexity
of the model checking problem for DEL
by providing a detailed computational complexity analysis
of the model checking problem for various (previously uninvestigated)
combinations of restrictions on the DEL model.
}
In particular, we studied various restrictions of the problem
where all models are S5,
including bounds on the number of agents, allowing only
single-pointed models, allowing no postconditions,
and allowing semi-private announcements rather than
updates with arbitrary event models.
We showed that the problem is already \PSPACE{}-hard
for very restricted settings.

Future research includes extending the computational
complexity analysis to additional restricted settings.
For instance, it would be interesting to see whether
the polynomial-time algorithm for Proposition~\ref{prop:ptime}
can be extended to the setting where the models
contain only relations that are transitive, Euclidean and serial
(KD45 models).
{
Additionally, it would be interesting to further investigate the
contribution of various parameters of the problem input to the
computational costs required to solve the problem---%
continuing an endeavor that was recently initiated
\cite{VandePol15,VandePolVanRooijSzymanik18}.
}
In the setting of KD45 models,
it would also be interesting to investigate the complexity of
the problem for
the case where all updates are private announcements
(i.e., a public announcement to a subset of agents,
where the remaining agents have no awareness that any action
has taken place).
Moreover, future research includes obtaining upper
bounds for the case where we only found lower bounds
(i.e., for the case of one agent, a single-pointed models,
and single-pointed event models with postconditions,
where we showed \DeltaP{2}-hardness).

\begin{acknowledgements}
%
We would like to thank anonymous reviewers for
their useful feedback on previous versions of the paper.
\end{acknowledgements}

\bibliographystyle{spmpsci}      

\newcommand{\noopsort}[1]{}

\end{document}